\newtheorem{theorem}{Theorem}[section]
\newtheorem{lemma}[theorem]{Lemma}
\newtheorem{definition}[theorem]{Definition}
\newenvironment{mylist}[1]{\begin{list}{}{
	\setlength{\leftmargin}{#1}
	\setlength{\rightmargin}{0mm}
	\setlength{\labelsep}{2mm}
	\setlength{\labelwidth}{8mm}
	\setlength{\itemsep}{0mm}}}
	{\end{list}}
\newcommand{\comment}[1]{}
\newcommand{\tinyspace}{\mspace{1mu}}
\newcommand{\abs}[1]{\left\lvert\tinyspace #1 \tinyspace\right\rvert}
\newcommand{\norm}[1]{\left\lVert\tinyspace #1 \tinyspace\right\rVert}
\newcommand{\tr}{\operatorname{Tr}}
\newcommand{\ip}[2]{\left\langle #1 , #2\right\rangle}
\newcommand{\setft}[1]{\mathrm{#1}}
\newcommand{\lin}[1]{\setft{L}\left(#1\right)}
\newcommand{\density}[1]{\setft{D}\left(#1\right)}
\newcommand{\herm}[1]{\setft{Herm}\left(#1\right)}
\newcommand{\pos}[1]{\setft{Pos}\left(#1\right)}
\newcommand{\sep}[1]{\setft{Sep}\left(#1\right)}
\newcommand{\class}[1]{\textup{#1}}
\newcommand{\inner}[2]{\langle #1, #2 \rangle}
\newcommand{\calA}{\mathcal{A}}
\newcommand{\C}{\mathbb{C}}
\newcommand{\R}{\mathbb{R}}
\newcommand{\Z}{\mathbb{Z}}
\newcommand{\Span}{\mathrm{Span}}
\newcommand{\interior}{\mathrm{int}}
\newcommand{\Pos}{\mathrm{Pos}}
\newcommand{\Sep}{\mathrm{Sep}}
\newcommand{\ketbra}[2]{\ket{#1} \bra{#2}}
\newcommand{\trnorm}[1]{\norm{#1}_{\mathrm {tr}}}  
\newcommand{\snorm}[1]{\norm{#1}_{\mathrm {\infty}}}    
\newcommand{\E}{\mathbb{E}}
\def\complex{\mathbb{C}}
\def\natural{\mathbb{N}}
\def\I{\mathbb{1}}
\def\({\left(}
\def\){\right)}
\def\X{\mathcal{X}}
\def\Y{\mathcal{Y}}
\def\Z{\mathcal{Z}}
\def\yes{\text{yes}}
\def\no{\text{no}}
\def\poly{\textup{poly}}
\def\blog{\textup{log}}
\newcommand{\ve}[1]{\mathbf{#1}}
\begin{document}

\title{\bf QMA variants with polynomially many provers}

\author{%
  Sevag Gharibian\thanks{%
Department of Computer Science, University of Illinois. 
Email: {
\tt
sggharib@cs.uwaterloo.ca}.}
  \and
  Jamie Sikora\thanks{%
Laboratoire d'Informatique Algorithmique: Fondements et Applications, Universit\'e Paris Diderot.
Email: {
\tt
jwjsikor@uwaterloo.ca}.}
  \and
Sarvagya Upadhyay\thanks{%
Centre for Quantum Technologies, National University of Singapore.
Email: {
\tt
sarvagya@nus.edu.sg}.}
}

\date{\quad \\ September 3, 2012}

\maketitle

\vspace{-5mm}

\begin{abstract}
We study three variants of multi-prover quantum Merlin-Arthur proof systems. We first show that the class of problems that can be efficiently verified using polynomially many quantum proofs, each of logarithmic-size, is exactly \class{MQA} (also known as QCMA), the class of problems which can be efficiently verified via a classical proof and a quantum verifier. We then study the class $\class{BellQMA}(\poly)$, characterized by a verifier who first applies unentangled, nonadaptive measurements to each of the polynomially many proofs, followed by an arbitrary but efficient quantum verification circuit on the resulting measurement outcomes. We show that if the number of outcomes per nonadaptive measurement is a polynomially-bounded function, then the expressive power of the proof system is exactly \class{QMA}. Finally, we study a class equivalent to \class{QMA}($m$), denoted $\class{SepQMA}(m)$, where the verifier's measurement operator corresponding to outcome {\it accept} is a fully separable operator across the $m$ quantum proofs. Using cone programming duality, we give an alternate proof of a result of Harrow and Montanaro [FOCS, pp. 633--642 (2010)] that shows a perfect parallel repetition theorem for $\class{SepQMA}(m)$ for any $m$.
\end{abstract}

\section{Introduction and summary of results}\label{scn:intro}

The study of classical proof systems has yielded some of the greatest achievements in theoretical computer science, from the Cook-Levin theorem~\cite{C71,L73}, which formally ushered in the age of NP verification systems and the now ubiquitous notion of NP-hardness, to the more modern PCP theorem~\cite{AS98,ALMSS98}, which has led to significant advancements in our understanding of hardness of approximation. A natural generalization of the class \class{NP}, or more accurately its probabilistic cousin Merlin-Arthur (\class{MA}), to the quantum setting is the class quantum Merlin-Arthur (QMA)~\cite{KSV02}, where a computationally powerful but untrustworthy prover, Merlin, sends a \emph{quantum} proof to convince an efficient \emph{quantum} verifier, Arthur, that a given input string $x \in \{ 0,1 \}^n$ is a YES-instance for a specified promise problem.

More specifically, a QMA proof system for a given promise problem $A$ is characterized by the following properties (see Section~\ref{sscn:complexityclasses} for formal definitions):
\begin{itemize}
\item For every YES-instance $x$ of $A$, there exists a polynomial-size quantum proof which can convince Arthur of this fact with high probability, with the smallest such success probability over all YES-instances called the \emph{completeness} of the protocol.
\item For every NO-instance $x$ of $A$ and for \emph{any} purported quantum proof, Arthur rejects with high probability, with the maximum success probability over all NO-instances called the \emph{soundness} of the protocol.
\end{itemize}
\noindent It is easy to see that QMA proof systems are at least as powerful as \class{NP} or \class{MA}, since the ability to process and exchange quantum information does not prevent Arthur from choosing to act classically.

Much attention has been devoted to \class{QMA} over recent years. We now have a number of problems which are \emph{complete} for \class{QMA} (see e.g.~\cite{B06,L06,BS07,LCV07,SV09,JGL09,WMN10,R09}), with the quantum analogue of classical constraint satisfaction,  the physically motivated $k$-local Hamiltonian problem~\cite{KSV02,KR03,KKR06,OT05,AGIK09}, being the canonical \class{QMA}-complete problem. In analogy with \class{NP}-complete problems, it is tempting to think of \class{QMA}-complete problems as hard even for a \emph{quantum} computer to solve, though this is somewhat of a misnomer as even \class{NP}-complete problems are generally believed to be intractable for quantum computers. \class{QMA} is an extremely robust complexity class that satisfies strong error-reduction properties, and using these properties one can, e.g., give a very elegant and simple proof that $\class{MA} \subseteq \class{QMA} \subseteq \class{PP}$ (the first containment follows trivially from the definition)~\cite{MW05}. However, there still remain important open questions --- for example, despite the fact that $\class{MA}$ is contained in the polynomial hierarchy ($\class{PH}$)~\cite{AB09}, we do not even know whether $\class{BQP} \subseteq\class{PH}$. 

An approach for understanding a complexity class is to consider how introducing variations to its definition changes its properties. In this paper, we thus ask: \emph{How does allowing multiple unentangled provers affect the expressive power of \class{QMA}?} In particular, we are interested in variants of the class $\class{QMA}(\poly)$, a.k.a. quantum Merlin-Arthur proof systems with polynomially many Merlins, where the verifier receives a polynomial number of quantum proofs, which are promised to be unentangled with each other. Note that the \emph{classical} version of this class collapses trivially to \class{MA}, as the set of potential strategies of a single Merlin and the set of potential strategies of multiple Merlins coincide. This logic fails, however, in the quantum case, as a single Merlin simulating the action of multiple Merlins can try to cheat by entangling the multiple proofs. Despite much effort, very little is known (more details under \emph{Previous Work} below) about the structural properties of  $\class{QMA}(\poly)$, except for the obvious containments $\class{QMA}\subseteq\class{QMA}(\poly)\subseteq\class{NEXP}$.

\paragraph{Our results:} We show the following three results regarding variants of $\class{QMA}(\poly)$.


\paragraph{1. A complete characterization in the logarithmic-size message setting.} Let $\class{QMA}_{\log}(\poly)$ denote the restriction of the class $\class{QMA}(\poly)$ to the setting where each prover's proof is at most a \emph{logarithmic} number of quantum bits, or \emph{qubits}. We show (for $\class{MQA}$ defined below):
\begin{theorem}\label{thm:qmalog}
    $\class{QMA}_{\log}(\poly)=\class{MQA}$.
\end{theorem}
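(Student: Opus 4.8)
The plan is to prove the two inclusions $\class{MQA} \subseteq \class{QMA}_{\log}(\poly)$ and $\class{QMA}_{\log}(\poly) \subseteq \class{MQA}$ separately. For the first (easy) inclusion, I would start from an $\class{MQA}$ protocol in which a single Merlin sends a classical proof $y \in \{0,1\}^{\poly(n)}$, break $y$ into $\poly(n)$ blocks of $O(\log n)$ bits each, and have the $i$-th prover send the $i$-th block encoded in the computational basis; this is a legal $\class{QMA}_{\log}(\poly)$ proof. Arthur's verifier begins by measuring each of the $m$ registers in the computational basis and then runs the original $\class{MQA}$ verification on the resulting classical string. Completeness is immediate, since honest provers send basis states that survive the measurement intact. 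For soundness, any (possibly entangled) proof collapses under the measurement to a probability distribution over classical strings $y$; since every individual $y$ is rejected with high probability on a NO-instance, linearity of the acceptance probability in the post-measurement distribution gives the desired soundness bound.

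The second inclusion is the substantive direction. First I would observe that, after standard error reduction for these proof systems, I may assume completeness exponentially close to $1$ and soundness exponentially close to $0$, so the protocol can tolerate a polynomially small perturbation of the proof. Writing the verifier's accept outcome as a POVM element $Q$ acting on the $m$ proof registers (with the workspace traced out), the acceptance probability on an unentangled proof $\rho_1 \otimes \cdots \otimes \rho_m$ is $\tr\!\left(Q (\rho_1 \otimes \cdots \otimes \rho_m)\right)$, which is multilinear in the $\rho_i$. Maximizing a multilinear function over the product of sets $\density{\cdot}$ attains its optimum at an extreme point of each factor, so the optimal unentangled proof may be taken to be a product of \emph{pure} states $\ket{\psi_1} \otimes \cdots \otimes \ket{\psi_m}$, each on $O(\log n)$ qubits. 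Moreover, since every separable (unentangled) proof is a convex combination of such product states, it suffices for the new protocol to be sound against products of pure states.

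This suggests the following $\class{MQA}$ protocol. A single Merlin sends classical descriptions --- lists of complex amplitudes truncated to $\poly(n)$ bits of precision --- of pure states $\ket{\psi_1}, \dots, \ket{\psi_m}$, for a total classical proof of length $m \cdot \poly(n) = \poly(n)$. Arthur interprets each description, prepares an approximation of the corresponding $\ket{\psi_i}$, tensors them into a product state, and runs the original verification circuit. The key enabling fact is that any pure state on $k = O(\log n)$ qubits can be prepared by a circuit of size $O(2^k) = \poly(n)$ from its classical list of amplitudes, so all $m$ states can be prepared efficiently.

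The main obstacle is the precision analysis. I would bound the error incurred by truncating amplitudes: preparing each $\ket{\psi_i}$ to trace distance $\epsilon/m$ from the target requires only $O(\log(m/\epsilon)) = O(\log n)$ extra bits per amplitude, keeping the proof polynomial, and by subadditivity of trace distance under tensor products the reconstructed product state is within $\epsilon$ of the intended one. Since the acceptance probability $\tr(Q \cdot)$ changes by at most $\epsilon$ under an $\epsilon$-perturbation in trace distance, choosing $\epsilon$ smaller than half the completeness--soundness gap preserves both completeness (Merlin describes the optimal product of pure states) and soundness (any description decodes to \emph{some} product of pure states, all of which are rejected). Care must also be taken that the approximation guarantees of the state-preparation circuit compose correctly with the amplitude truncation; this combination of efficient log-qubit state preparation with a uniform precision bound across polynomially many provers is where the real work lies.
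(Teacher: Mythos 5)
Your proposal is correct and follows essentially the same route as the paper's proof: for $\class{MQA} \subseteq \class{QMA}_{\log}(\poly)$, encode the classical proof in computational-basis states that Arthur measures before running the $\class{MQA}$ verifier, and for the reverse inclusion, have a single Merlin send truncated classical amplitude descriptions of the logarithmic-size pure proofs, which Arthur re-prepares with polynomial-size circuits and feeds to the original verifier, with subadditivity of trace distance over tensor products controlling the accumulated error. The only differences are cosmetic: the paper sends one qubit per prover (rather than $O(\log n)$-bit blocks) in the easy direction, and it skips your preliminary error-reduction step by using inverse-exponential precision---a step your own argument also renders unnecessary, since choosing the perturbation below the constant $2/3$ versus $1/3$ gap already suffices.
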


\noindent Here, \class{MQA}, also known as QCMA in the literature~\cite{AN02,JW06,A06,AK07,Beigi08,ABBS08,WY08} (the name \class{MQA} was suggested by Watrous~\cite{W09}), is defined as $\class{QMA}$ except Merlin's proof is a polynomial-size \emph{classical} string. Theorem~\ref{thm:qmalog} says that if each prover is restricted to sending short quantum proofs, then one can not only do away with multiple provers, but also of the need for \emph{quantum} proofs altogether.




\paragraph{2. Towards a non-trivial upper bound on $\class{BellQMA}(\poly)$.} Another approach to studying the question of whether $\class{QMA}=\class{QMA(\poly)}$ is to understand the properties of restricted versions of $\class{QMA}(\poly)$, and this is precisely where the class \class{BellQMA}(\poly) comes into play. $\class{BellQMA}(\poly)$ is defined~\cite{B08,ABDFS09,CD10} analogously to $\class{QMA}(\poly)$, except that before applying his quantum verification circuit to the polynomially many unentangled quantum proofs, Arthur must measure each proof using a nonadaptive and unentangled (across all proofs) measurement (we call this \emph{Stage 1} of the verification). He then feeds the resulting \emph{classical} outcomes induced by these measurements into his arbitrary efficient quantum circuit (we call this \emph{Stage 2}). This quantum circuit implements a two-outcome measurement operation corresponding to outcomes {\it accept} and {\it reject}.

The significance of $\class{BellQMA}(\poly)$ here is that if $\class{QMA} \neq \class{BellQMA}(\poly)$, then it follows that $\class{QMA}\neq\class{QMA}(\poly)$, since $\class{QMA}\subseteq\class{BellQMA}(\poly)\subseteq\class{QMA}(\poly)$. To this end, Brand\~{a}o has shown the negative result that for \emph{constant} $m$, $\class{QMA}=\class{BellQMA}(m)$~\cite{B08}. Where the class $\class{BellQMA}(\poly)$ lies, however, remains open. For example, the techniques used to show $\class{QMA}(2)=\class{QMA}(\poly)$ \cite{HM10} do not straightforwardly extend to show the analogous result $\class{BellQMA}(2)=\class{BellQMA}(\poly)$ as they require entangled measurements (i.e. SWAP test measurements) across multiple proofs, which violate the definition of $\class{BellQMA}$.

To make progress on $\class{BellQMA}(\poly)$, we introduce the class $\class{BellQMA}[r,m]$, which is defined as $\class{BellQMA}(m)$ with $m$ provers and the additional restriction that in Stage 1 above, the number of outcomes per proof in Arthur's nonadaptive measurements is upper bounded by $r$. Our contribution is the following:

\begin{theorem}\label{thm:bellqma}
For any polynomially bounded functions $r,m:\natural \rightarrow \natural$, we have the containment $\class{BellQMA}[r,m] \subseteq \class{QMA}$ (where the containment holds with equality when $r\geq 2$).
\end{theorem}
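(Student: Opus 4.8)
The plan is to first reduce the acceptance condition to a multilinear optimization over low-dimensional distributions, and then design a $\class{QMA}$ protocol that lets a single prover certify the optimal product strategy even though it may entangle its messages. Fix the $\class{BellQMA}[r,m]$ verifier, with Stage-1 POVM $\{M_i^a\}_{a=1}^{r}$ on the $i$-th proof and Stage-2 acceptance probability $f(\vec a)\in[0,1]$ on an outcome tuple $\vec a=(a_1,\dots,a_m)$, where $f$ is computed by an efficient quantum circuit. For unentangled proofs $\rho_1,\dots,\rho_m$ the acceptance probability is $\sum_{\vec a} f(\vec a)\prod_i \tr(M_i^{a_i}\rho_i)$, a multilinear function of the outcome distributions $p_i:=(\tr(M_i^a\rho_i))_a$, each ranging over the achievable set $S_i:=\{(\tr(M_i^a\rho))_a:\rho\in\density{\calH_i}\}\subseteq\Delta_r$. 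Thus the optimum is $\omega=\max_{p_i\in S_i}\sum_{\vec a}f(\vec a)\prod_i p_i(a_i)$, equivalently the product-state maximum of the fully separable operator $E=\sum_{\vec a}f(\vec a)\bigotimes_i M_i^{a_i}$. A subtlety shapes the whole proof: $\omega$ is the maximum of $\tr(E\rho)$ over \emph{product} $\rho$, which can be strictly smaller than $\lambda_{\max}(E)$. Hence the naive protocol (receive one copy of each proof and run the verifier verbatim) is unsound, since a cheating prover can entangle the registers to exploit correlations rewarded by $f$ that no product of achievable distributions produces. Two things must be enforced: that each proof contributes a point of $S_i$ rather than an arbitrary point of $\Delta_r$, and that the contributions be combined without correlations.

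I would therefore have the $\class{QMA}$ prover send $N=\poly(m,r,1/\delta)$ copies of each proof, with registers indexed by $(i,t)$ for $i\in[m]$, $t\in[N]$. The verifier measures every register of group $i$ with $\{M_i^a\}$ and records the empirical distribution $\hat p_i\in\Delta_r$ of the $N$ outcomes; it then draws, using its own private randomness, independent samples $a_i\sim\hat p_i$, runs the Stage-2 circuit on $\vec a$, and accepts with probability $f(\vec a)$. Resampling from the $\hat p_i$ with fresh classical randomness is the crucial move: conditioned on the realized $\hat p_i$, the $a_i$ are independent across $i$, so the acceptance probability equals $\E_{\hat p}\big[\sum_{\vec a}f(\vec a)\prod_i\hat p_i(a_i)\big]$, and cross-proof correlations can never be used — only the certified marginals matter.

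For completeness, the honest prover sends $\bigotimes_i(\sigma_i^\ast)^{\otimes N}$ for optimal states; within each group the outcomes are i.i.d., so $\hat p_i$ concentrates on $p_i^\ast\in S_i$ and the acceptance is at least $\omega-O(mr\delta)$ with overwhelming probability. For soundness the key lemma is that, for \emph{every} state of the copies of group $i$ (arbitrarily entangled, including with the other groups), $\Pr[\operatorname{dist}(\hat p_i,S_i)>\delta]\le\exp(-\Omega(N\delta^2/\poly))$. I would prove this by an operator Chernoff/tilting argument: the event that $\hat p_i$ violates a fixed separating direction $h\in\R^r$ has probability at most $\lambda_{\max}\big(\sum_q e^{sN\langle h,q\rangle}\Pi_q\big)\,e^{-sN(\lambda^\ast+\gamma)}=\lambda_{\max}\big(\sum_a e^{sh_a}M_i^a\big)^{N}e^{-sN(\lambda^\ast+\gamma)}$, where $\Pi_q$ is the permutation-invariant POVM element for empirical type $q$, the number $\lambda^\ast=\lambda_{\max}(\sum_a h_a M_i^a)$ is exactly the support function of $S_i$ in direction $h$, and the tensor-power identity $\sum_q e^{sN\langle h,q\rangle}\Pi_q=(\sum_a e^{sh_a}M_i^a)^{\otimes N}$ removes all dependence on the exponential dimension of the proofs. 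Union-bounding over an $\epsilon$-net of directions in $\R^r$ — whose size is controlled precisely because $r$ is polynomially bounded — yields the lemma, so $\hat p_i\in S_i$ up to $\delta$ with overwhelming probability and the acceptance is at most $\omega+O(mr\delta)$. Taking $\delta$ a small inverse polynomial produces an inverse-polynomial promise gap, which standard $\class{QMA}$ amplification boosts to a constant.

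I expect the main obstacle to be exactly this dimension-independent concentration of the empirical types onto $S_i$: a single prover can entangle all $mN$ registers, so no unentanglement may be assumed, and ordinary quantum de Finetti bounds (whose error scales with the exponential local dimension) are useless here; it is the polynomial bound on $r$ that keeps both the operator Chernoff estimate and the net over $\R^r$ efficient, which is where that hypothesis is used essentially. Finally, the reverse inclusion giving equality for $r\ge 2$ is immediate: any $\class{QMA}$ verifier, after its ancillas and circuit are absorbed, acts on its single proof as a two-outcome POVM $\{E,I-E\}$, i.e. precisely a Stage-1 measurement with $r=2$ outcomes followed by a trivial Stage-2, so $\class{QMA}\subseteq\class{BellQMA}[2,1]\subseteq\class{BellQMA}[r,m]$.
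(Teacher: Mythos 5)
Your proposal is correct, but it takes a genuinely different route from the paper's proof. The paper \emph{delegates} Stage 1 to Merlin: the prover sends, alongside $k$ copies of each proof (register $\mathsf{Y}$), a \emph{classical} register $\mathsf{X}$ containing the claimed outcome distributions $p_j$; Arthur spot-checks a single uniformly random pair $(j,i)$ by measuring all $k$ copies of proof $j$ and comparing the empirical frequency of outcome $i$ against the claimed $p_j(i)$, and then runs Stage 2 on samples drawn from the \emph{claimed} distributions. Crucially, the paper's soundness analysis avoids your concentration-onto-$S_i$ lemma altogether via one observation: the expectation of the empirical frequency $n_j(i)/k$ equals $\ip{\Pi_j(i)}{\xi_j}$, where $\xi_j$ is the \emph{average} of the reduced states over the $k$ copies, and this is automatically an achievable (honest) distribution $q_j \in S_j$; so if the claimed $p_j$'s pass Stage 2 with probability above $1/2$ while the $q_j$'s cannot exceed $1/3$ (soundness of the original protocol), some entry must satisfy $|p_j(i)-q_j(i)| \ge 1/(10mr)$, and a Markov bound then catches the cheater with inverse-polynomial probability. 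This gives only a $1-1/\poly$ soundness versus near-$1$ completeness gap, closed afterwards by standard \class{QMA} amplification. You instead dispense with the classical register entirely: your verifier runs Stage 1 itself on all $N$ copies, forms the empirical distributions $\hat p_i$, and — the move both proofs share in spirit — destroys cross-proof correlations by classically resampling $a_i \sim \hat p_i$ independently before running Stage 2. Your soundness therefore needs the stronger, dimension-independent statement that each $\hat p_i$ concentrates near the achievable set $S_i$ for \emph{arbitrary} entangled states; your operator-Chernoff/tilting argument, via the identity $\sum_q e^{sN\langle h,q\rangle}\Pi_q = \left(\sum_a e^{sh_a}M_i^a\right)^{\otimes N}$ together with a net over directions in $\R^r$, does deliver this — the one step left implicit is the bound $\lambda_{\max}\left(\sum_a e^{sh_a}M_i^a\right) \le e^{s\lambda^\ast + O(s^2)}$, which follows routinely from $\sum_a M_i^a = \I$ and a second-order expansion. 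What each approach buys: the paper's argument is more elementary (a classical Chernoff bound for completeness and a Markov bound for soundness, with no nets or operator Chernoff machinery), because the averaged-state trick sidesteps any uniform concentration statement; yours is technically heavier but cleaner at the protocol level (no consistency-check string, no delegation) and achieves a constant completeness-soundness gap directly, with no final amplification step. Both proofs use the polynomial bound on $r$ essentially — the paper because the claimed distributions must be polynomially long and the random spot-check must hit a bad entry with probability $1/(mr)$, you because a polynomial number of copies must beat both the net size over $\R^r$ and the concentration rate.
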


\noindent In other words, $\class{BellQMA}(\poly)$ cannot be used to show $\class{QMA}\neq\class{QMA}(\poly)$ if the verifier in the $\class{BellQMA}(\poly)$ protocol is restricted to have a polynomially bounded number of measurement outcomes per proof in Stage 1. We remark that, in general, the number of such measurement outcomes can be exponential in the input length --- the restriction that $r$ be a polynomially bounded function is crucial for the proof of Theorem~\ref{thm:bellqma}. For this reason, our result complements, rather than subsumes Brand\~{a}o's result~\cite{B08}. In other words, in our notation, Brand\~{a}o has shown that $\class{BellQMA}[\exp,\rm{const}] = \class{QMA}$, and we show $\class{BellQMA}[\poly,\poly] = \class{QMA}$.

Note that we allow the second stage of the verification procedure above to be {\it quantum}, as per the definition suggested by Chen and Drucker~\cite{CD10}, as opposed to {\it classical}, as studied by Brand\~{a}o~\cite{B08}. The conclusion of Theorem~\ref{thm:bellqma} holds even if the second stage of verification is completely classical.

Finally, it is worth noting that by combining Theorems~\ref{thm:qmalog} and~\ref{thm:bellqma}, we conclude that in the setting of $\class{BellQMA}(\poly)$, if $\class{MQA} \neq \class{QMA}$, then having the Merlins send logarithmic-size proofs without any restriction on the number of local measurement outcomes of Arthur in Stage 1 has less expressive power than sending polynomial-size proofs but restricting the number of outcomes, even though the number of measurement outcomes in Stage 1 per Merlin in both cases is the same, i.e. polynomial in the input length.

\paragraph{3. Perfect parallel repetition for $\class{SepQMA}(m)$.} A key question in designing proof systems is how to improve the completeness and soundness parameters of a verification protocol without increasing the required number of rounds of communication. A natural approach for doing so is to repeat the protocol multiple times in parallel. With \class{QMA}, however, this raises the concern that Merlin might try to cheat by entangling his proofs across these parallel runs. If, though, \emph{perfect parallel repetition} holds, it means that for any input string $x$, if the verification procedure $V$ accepts with probability $p(|x|)$, then if we run $V$ $k$ times in parallel, the probability of accepting in all $k$ runs of $V$ is precisely $p(|x|)^k$. Note that we do not put any restriction on the quantum proof, which can be entangled across the $k$ executions of the protocol. In other words, if perfect parallel repetition holds, there is no incentive for Merlin to cheat --- an honest proof which is a product state across all $k$ runs achieves the maximum success probability.

Our final contribution is an alternate proof of a perfect parallel repetition theorem for a class which is equivalent~\cite{HM10} to \class{QMA}($m$), namely $\class{SepQMA}(m)$. The theorem was first proved in Harrow and Montanaro~\cite{HM10} in connection with an error reduction technique for $\class{QMA}(\poly)$. However, our proof is significantly different from theirs and uses the cone programming characterization of $\class{QMA}(\poly)$. Here $\class{SepQMA}(m)$ is defined as \class{QMA}($m$) with the restriction that Arthur's measurement operator corresponding to acceptance is a \emph{separable} operator across the $m$ unentangled proofs. (Note that this does not imply that Arthur's measurement operator corresponding to rejection is also separable.) We show:

\begin{theorem}
[see~\cite{HM10} for alternate proof]\label{thm:parrep} The class $\class{SepQMA}(m)$ admits perfect parallel repetition.
\end{theorem}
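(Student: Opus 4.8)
The plan is to recast Arthur's optimal acceptance probability as a conic optimization over the separable cone and to attack parallel repetition through its dual. Fix an input and let $Q$ be Arthur's \emph{accept} operator; by definition of $\class{SepQMA}(m)$ we have $0\preceq Q\preceq\I$ and $Q\in\Sep$, i.e.\ $Q=\sum_s B_1^{(s)}\otimes\cdots\otimes B_m^{(s)}$ with each $B_i^{(s)}\succeq0$. The maximum acceptance probability is
\[
 \omega \;=\; \max\{\,\ip{Q}{X}\;:\;\tr(X)=1,\ X\in\Sep\,\},
\]
and since a linear functional on the compact convex set of separable density operators is maximized at an extreme point (a pure product state), this equals $\max_{\rho_1,\dots,\rho_m}\ip{Q}{\rho_1\otimes\cdots\otimes\rho_m}$. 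Reading the above as a cone program over $\Sep$ with the single constraint $\tr(X)=1$, its dual is
\[
 \omega \;=\; \min\{\,y\in\real\;:\;y\I-Q\in\Wit\,\},
\]
where $\Wit$ is the cone dual to $\Sep$. First I would note that strong duality holds, since the maximally mixed state lies in the interior of $\Sep$ and furnishes a Slater point; crucially, dual feasibility of $y=\omega$, namely $\omega\I-Q\in\Wit$, is immediate from the definition of $\omega$ as the supremum of $\ip{Q}{\cdot}$ over product states.

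Running the protocol $k$ times in parallel produces the \emph{accept} operator $Q^{\otimes k}$, where after regrouping prover $i$ holds the $k$-fold system $\mathcal{H}_i^{\otimes k}$; writing $\Sep_{(k)},\Wit_{(k)}$ for the separable cone and its dual with respect to this $m$-partite cut, the optimal acceptance probability is $\omega_k=\max\{\ip{Q^{\otimes k}}{X}:\tr(X)=1,\ X\in\Sep_{(k)}\}$. The inequality $\omega_k\ge\omega^k$ is immediate: letting each prover send $k$ copies of its optimal single-shot state yields the product input $(\rho_1^\ast\otimes\cdots\otimes\rho_m^\ast)^{\otimes k}$, on which $Q^{\otimes k}$ evaluates to $\omega^k$. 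The content is the reverse inequality, and I would obtain it by exhibiting $y=\omega^k$ as a feasible point of the $k$-fold dual, i.e.\ by proving $\omega^k\I-Q^{\otimes k}\in\Wit_{(k)}$, whence weak duality gives $\omega_k\le\omega^k$. To this end I would use the telescoping identity
\[
 \omega^k\,\I-Q^{\otimes k}\;=\;\sum_{j=1}^{k}\omega^{\,j-1}\,\I^{\otimes(j-1)}\otimes(\omega\I-Q)\otimes Q^{\otimes(k-j)},
\]
and show that each summand lies in $\Wit_{(k)}$; since $\Wit_{(k)}$ is a convex cone, the sum does too.

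Fix a summand and an arbitrary product test state $\sigma_1\otimes\cdots\otimes\sigma_m$ with $\sigma_i\in\density{\mathcal{H}_i^{\otimes k}}$, and trace out every copy except the $j$-th. On the copies $j{+}1,\dots,k$ carrying a factor $Q=\sum_s\bigotimes_i B_i^{(s)}$ this amounts to locally \emph{measuring} each prover's registers with the operators $B_i^{(s)}$, while on copies $1,\dots,j{-}1$ carrying a factor $\I$ it is an ordinary partial trace. Because each $\sigma_i$ belongs to a single prover and the test state is a product across provers, for every choice of the measurement indices what survives on the $j$-th copy is an unnormalized \emph{product} operator $\hat\sigma_1\otimes\cdots\otimes\hat\sigma_m$ with each $\hat\sigma_i\succeq0$ (a partial trace of a positive operator). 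The summand evaluated on the test state is therefore a nonnegative combination of numbers $\ip{\omega\I-Q}{\hat\sigma_1\otimes\cdots\otimes\hat\sigma_m}$, each of which is nonnegative because $\omega\I-Q\in\Wit$ is a single-copy witness and its argument is a product operator. Hence the summand is in $\Wit_{(k)}$, which completes the upper bound and yields $\omega_k=\omega^k$.

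The step I expect to be the main obstacle, and the one place where separability of $Q$ is indispensable, is precisely this reduction of each telescoped term to the single-copy witness condition. If $Q$ were only positive semidefinite --- as in unrestricted $\class{QMA}(m)$ --- then tracing a copy against $Q$ would correlate the provers, the residual operator on the surviving copy would fail to be a product across the $m$ provers, and one could no longer invoke $\omega\I-Q\in\Wit$. It is the explicit separable decomposition $Q=\sum_s\bigotimes_i B_i^{(s)}$ that keeps the induced state a product and drives the argument; carrying out this factorization carefully for general $m$ (bookkeeping the partial traces over the $Q$- and $\I$-copies, and the normalization of the $\hat\sigma_i$) is the technical heart, after which weak duality and the trivial lower bound close the proof.
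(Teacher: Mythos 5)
Your proposal is correct, and it shares the paper's high-level strategy---write the optimal acceptance probability as a cone program over the separable cone and certify the upper bound by exhibiting a dual feasible point---but the certificate is constructed along a genuinely different route. The paper treats the two-fold product of two (possibly distinct) protocols: it invokes strong duality (Slater points for both primal and dual) to identify the dual optima $t_1,t_2$ with the primal values, isolates as Lemma~5.1 the containments $\Sep^{\ast}\otimes\Sep\subseteq\Sep^{\ast}$ and $\Sep\otimes\Sep^{\ast}\subseteq\Sep^{\ast}$ (with respect to the coarser, prover-wise partition), and then realizes the certificate $t_1t_2\,\I-C_1\otimes C_2$ as the \emph{average} of $(t_1\I-C_1)\otimes(t_2\I+C_2)$ and $(t_1\I+C_1)\otimes(t_2\I-C_2)$; the $k$-fold statement then follows by induction, since $C_1\otimes C_2$ is again separable for the coarse cut. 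You instead handle $k$-fold repetition in one shot via the telescoping identity $\omega^k\I-Q^{\otimes k}=\sum_{j}\omega^{j-1}\I^{\otimes(j-1)}\otimes(\omega\I-Q)\otimes Q^{\otimes(k-j)}$, and you avoid strong duality entirely by observing that $\omega\I-Q\in\Sep^{\ast}$ follows directly from the definition of $\omega$, so only weak duality is needed---a lighter footing. The technical core is nevertheless the same: your verification that each telescoped summand lies in the coarse dual cone (tracing the $\I$- and $Q$-copies against a product test state, which leaves an unnormalized product operator because $Q$ is separable and positivity survives via the $B^{1/2}$-sandwich under the partial trace) is exactly the computation proving the paper's Lemma~5.1, and for $k=2$ your two summands $(\omega\I-Q)\otimes Q$ and $\omega\I\otimes(\omega\I-Q)$ instantiate precisely its two containments. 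What each approach buys: yours gives $k$-fold repetition without induction and with only weak duality and a direct feasibility check; the paper's symmetric two-protocol formulation is slightly more general (perfect multiplicativity across \emph{different} $\class{SepQMA}(m)$ protocols), which is the form used for the error-reduction application in~\cite{HM10}.
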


\noindent Our alternate proof of Theorem~\ref{thm:parrep} is significant in that, to the best of our knowledge, it is the first use of duality theory for a cone program \emph{other} than a semidefinite program to establish a parallel repetition result (note that cone programming generalizes semidefinite programming). We remark that semidefinite programs have been previously used to show perfect or strong parallel repetition theorems for various other models of (single or two-prover) quantum interactive proof systems~\cite{CSUU08, KRT10, G09}, and that the alternate proof of Theorem~\ref{thm:parrep} of Harrow and Montanaro is not based on cone programming. Perfect parallel repetition for $\class{SepQMA}(m)$ in itself is interesting, as it has been used to show that error reduction is possible for $\class{QMA}(m)$ proof systems~\cite{HM10}.

\paragraph{Proof ideas and tools:} The proof of our first result, Theorem~\ref{thm:qmalog}, is simple, and is an application of the facts that (1) quantum states of a logarithmic number of qubits can be described to within inverse exponential precision using a polynomial number of classical bits, and conversely that (2) given such a classical description, a logarithmic-size quantum state can be efficiently prepared by a quantum circuit. Hence, roughly speaking, one can replace a polynomial number of logarithmic-size quantum proofs with a single polynomial size classical proof, thereby avoiding the danger of a cheating Merlin using entanglement. Although the proof is simple, one cannot hope for a better characterization using other techniques because the reverse containment $\class{MQA}\subseteq \class{QMA}_{\log}(\poly)$ holds using similar ideas.

More technically challenging is our second result, Theorem~\ref{thm:bellqma}. To show the containment $\class{BellQMA}[\poly,\poly]\subseteq \class{QMA}$ (note that the reverse containment $\class{QMA}\subseteq\class{BellQMA}[\poly,\poly]$ is trivial since $\class{QMA} = \class{BellQMA}[2,1]$), we demonstrate a \class{QMA} protocol which simulates an arbitrary $\class{BellQMA}[\poly,\poly]$ protocol using the following observation: {Although consolidating $m$ quantum proofs into a single quantum proof raises the possibility of cheating using entanglement, if Arthur is {also} sent an appropriate classical ``consistency-check'' string, then a dishonest Merlin can be caught with non-negligible probability.}

Specifically, in our \class{QMA} protocol, we ask a single Merlin to send the $m$ quantum proofs of the original \class{BellQMA} protocol (denoted by a single state $\ket{\psi}$), accompanied by a  ``consistency-check'' string $\ve{p}$ which is a classical description of the probability distributions obtained as the output of Stage 1. One can think of this as having the \class{QMA} verifier \emph{delegate} Stage 1 of the \class{BellQMA} verification to Merlin. Arthur then performs a consistency check between $\ket{\psi}$ and $\ve{p}$ based on the premise that if Merlin is honest, then $\ve{p}$ should arise from running Stage 1 of the original verification on $\ket{\psi}$. If this check passes, then Arthur runs Stage 2 of the \class{BellQMA} verification on $\ve{p}$. If Merlin tries to cheat, however, we show that the check detects this with non-negligible probability. Note that the accuracy of the consistency check crucially uses the fact that there are at most polynomially many outcomes to check for each local measurement of Stage 1.

Finally, our last result, Theorem~\ref{thm:parrep}, is shown using duality theory for a class of cone programs that captures the success probability of a $\class{QMA}(\poly)$ protocol. In particular, we phrase the maximum acceptance probability of a (possibly cheating) prover for the two-fold repetition of a $\class{SepQMA}(m)$ verification protocol as a cone program. We then demonstrate a feasible solution for its dual yielding an upper bound on the maximum acceptance probability. The objective value of this dual solution is precisely the product of the optimum values of the two instances of the $\class{SepQMA}(m)$ verification protocols. We conclude that one of the optimal strategies of the provers is to be faithful in the following sense: Each prover elects not to entangle his/her two quantum proofs for the two instances of the $\class{SepQMA}(m)$ protocol and instead sends a tensor product of optimal proofs for both the instances.

\paragraph{Previous work.} The expressive power of multiple Merlins was first studied by Kobayashi, Matsumoto and Yamakami~\cite{KMY03}, who showed that $\class{QMA}(2) = \class{QMA}(\poly)$ if and only if the class of \class{QMA}(2) protocols with completeness $c$ and soundness $s$ (with at least inverse polynomial gap) is exactly equal to $\class{QMA}(2)$ protocols with completeness $2/3$ and soundness $1/3$. A substantial amount of research has since been devoted to understanding the properties of multi-prover quantum Merlin-Arthur proof systems. Recently, Harrow and Montanaro~\cite{HM10} demonstrated a {\it product state test}, wherein given two copies of a {\it pure} quantum state on multiple systems, the test distinguishes between the cases when the quantum state is a {\it fully} product state across all the systems or {\it far} from any such state. Using this test, they answered a few important questions regarding $\class{QMA}(\poly)$. In particular, they showed that
\[
\class{QMA}(2) = \class{QMA}(\poly)
\]
and that error reduction is possible for such proof systems. Prior to their result, the answers to both the questions were known to be affirmative assuming a {\it weak} version of the Additivity Conjecture~\cite{ABDFS09}. One of the crucial properties of the product state test is that it can be converted into a $\class{QMA}(2)$ protocol, where Arthur's measurement operator corresponding to outcome {\it accept} is a separable operator across the two proofs. Harrow and Montanaro established a perfect parallel repetition theorem for such proof systems, a crucial step in obtaining exponentially small error probabilities.

Blier and Tapp initiated the study of \emph{logarithmic}-size unentangled quantum proofs~\cite{BT10}. 
They showed that two unentangled quantum proofs suffice to show that a 3-coloring of an input graph exists, implying that $\class{NP}$ has \emph{succinct} unentangled quantum proofs. A drawback of their protocol is that although it has {\it perfect} completeness, its soundness is only inverse polynomially bounded away from $1$. Shortly after, Aaronson, Beigi, Drucker, Fefferman and Shor~\cite{ABDFS09} showed that satisfiability of any 3-SAT formula of size $n$ can be proven by $\widetilde{O}(\sqrt{n})$ unentangled quantum proofs of $O(\log n)$ qubits with perfect completeness and constant soundness (see also~\cite{CD10}). In a subsequent paper~\cite{Beigi08}, Beigi improved directly on Blier and Tapp's result~\cite{BT10} by showing that by sacrificing perfect completeness, one can show that \class{NP} has two logarithmic-size quantum proofs with a better gap between completeness and soundness probabilities than in~\cite{BT10}. Very recently, Chiesa and Forbes showed a better completeness and soundness gap of $\Omega\left(\frac{1}{n^2}\right)$ for the Blier and Tapp protocol~\cite{CF11}. Also, Le Gall, Nakagawa, and Nishimura showed that 3-SAT has a $\class{QMA}_{\log}(2)$ proof system with completeness~$1$ and soundness $1 - \Omega\left(\frac{1}{n\poly\blog(n)}\right)$~\cite{LNN12}.

Finally, one of the open questions raised in Ref.~\cite{ABDFS09} concerns  the power of Arthur's verification procedure. In particular, the paper introduces two different classes of verification procedures, \class{BellQMA} and \class{LOCCQMA} verification. Roughly speaking, \class{LOCCQMA} verification corresponds to Arthur applying a measurement operation that can be implemented by Local Operations and Classical Communication (LOCC) (with respect to the partition induced by the multiple proofs). The authors raised the question of whether $\class{BellQMA}(\poly) = \class{QMA}$ or not. Brand\~{a}o~\cite{B08} showed that $\class{BellQMA}(m)$ is equal to \class{QMA} for constant $m$. In a recent development, Brand\~{a}o, Christandl and Yard~\cite{BCY11} showed that $\class{LOCCQMA}(m)$ is equal to \class{QMA} for constant $m$.

\paragraph{Organization of this paper.} We begin in Section~\ref{scn:preliminaries} with background and notation, defining relevant complexity classes in Section~\ref{sscn:complexityclasses}, and reviewing cone programming in Section~\ref{sscn:conic}. Theorems~\ref{thm:qmalog},~\ref{thm:bellqma}, and~\ref{thm:parrep} are proved in Sections~\ref{scn:qmalog},~\ref{scn:bellqma}, and~\ref{scn:parrep}, respectively. We conclude with open problems in Section~\ref{scn:conclusions}.

\section{Preliminaries and Notation}\label{scn:preliminaries}

We begin by setting our notation, and subsequently review the background material required for this paper. First, the notation $[m]$ indicates the set $\{1, \ldots, m\}$, and $\abs{x}$ the length of a string $x \in \{ 0,1 \}^\ast$. We let uppercase script letters $\X,\Y,\Z$ denote complex Euclidean spaces. We denote the sets of linear, Hermitian, positive semidefinite, and density operators acting on vector space $\X$ by $\lin{\X}$, $\herm{\X}$, $\pos{\X}$, and $\density{\X}$, respectively. We denote the standard Hilbert-Schmidt inner product of operators $A$ and $B$ as $\ip{A}{B}:=\tr(A^\ast B)$, where $A^*$ denotes the adjoint of $A$. The spectral and trace norms of an operator $A$ are given by $\snorm{A} := \max\{\norm{Au} : \norm{u} = 1\}$ and $\trnorm{A} := \tr\big(\sqrt{A^{\ast}A}\big)$, respectively, where $\norm{u}$ denotes the Euclidean norm of a vector $u$. These can be thought of as the largest singular value and the sum of singular values of $A$, respectively. A useful lemma in this paper regarding the trace norm is the following:

\begin{lemma}[\cite{W02}]\label{lem:tracenorm}
Let $\{\rho_1 \dots, \rho_k\} \subset \density{\X}$ and $\{\sigma_1, \dots, \sigma_k\} \subset \density{\X}$. Then
\[
\bigg\Vert \bigotimes_{i=1}^k \rho_i - \bigotimes_{i=1}^k \sigma_i \bigg\Vert_{\textup{tr}} \le \sum_{i=1}^k \trnorm{\rho_i - \sigma_i}.
\]
\end{lemma}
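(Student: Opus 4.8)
The plan is to prove the inequality by a standard telescoping (hybrid) argument: I would interpolate between the two tensor products one factor at a time, and then collapse each interpolation step to a single-factor trace distance using the multiplicativity of the trace norm under tensor products together with the fact that density operators have unit trace norm.

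First I would introduce the hybrid operators $H_j := \sigma_1 \otimes \cdots \otimes \sigma_j \otimes \rho_{j+1} \otimes \cdots \otimes \rho_k$ for $j = 0, 1, \dots, k$, so that $H_0 = \bigotimes_{i=1}^k \rho_i$ and $H_k = \bigotimes_{i=1}^k \sigma_i$ are exactly the two operators whose difference we wish to bound. The difference then telescopes as $H_0 - H_k = \sum_{j=1}^k (H_{j-1} - H_j)$, and applying the triangle inequality for the trace norm immediately gives $\trnorm{H_0 - H_k} \le \sum_{j=1}^k \trnorm{H_{j-1} - H_j}$.

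The heart of the argument is to evaluate each summand. Consecutive hybrids $H_{j-1}$ and $H_j$ differ only in their $j$-th tensor factor, so $H_{j-1} - H_j = \left(\bigotimes_{i<j} \sigma_i\right) \otimes (\rho_j - \sigma_j) \otimes \left(\bigotimes_{i>j} \rho_i\right)$. Here I would invoke the key fact that the trace norm is multiplicative over tensor products, $\trnorm{A \otimes B} = \trnorm{A} \cdot \trnorm{B}$, which follows from the observation that the singular values of $A \otimes B$ are precisely the pairwise products of the singular values of $A$ and of $B$. Since each $\rho_i$ and $\sigma_i$ is a density operator, hence positive semidefinite with unit trace, we have $\trnorm{\rho_i} = \trnorm{\sigma_i} = 1$, so every surrounding factor contributes a multiplicative factor of $1$. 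This leaves $\trnorm{H_{j-1} - H_j} = \trnorm{\rho_j - \sigma_j}$, and summing over $j$ yields the claimed bound.

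I do not expect any genuine obstacle, as the statement is a clean telescoping estimate; the only point requiring a little care is the multiplicativity of the trace norm under tensor products, which I would either cite as standard or justify via the singular-value decomposition. I would also verify the boundary cases $j=1$ and $j=k$, where the empty tensor products are interpreted as the scalar $1$, so that the formula for $H_{j-1} - H_j$ remains valid.
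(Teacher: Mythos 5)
Your proof is correct and complete. Note that the paper itself offers no proof of this lemma at all --- it is imported directly from Watrous~\cite{W02} as a cited fact --- so there is no in-paper argument to compare against; your telescoping/hybrid decomposition, combined with multiplicativity of the trace norm under tensor products ($\trnorm{A \otimes B} = \trnorm{A}\cdot\trnorm{B}$, justified via singular values) and the fact that density operators have unit trace norm, is exactly the standard argument for this inequality, and every step as you state it is sound.
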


Next, we say a (possibly unnormalized) operator $A \in \pos{\X_1 \otimes \dots \otimes \X_m}$ is {\it fully separable} if it can be written as
\[
A = \sum_{i=1}^k P_{1}(i) \otimes \dots \otimes P_{m}(i)
\]
where $P_j(i) \in \pos{\X_j}$, for every $j \in [m]$ and $i \in [k]$. The set of fully separable operators is denoted $\sep{\X_1,\X_2, \dots ,\X_m}$. This notation is helpful in the context of cone programming. In the setting of quantum information, one typically also has $\tr(A)=1$. The set of fully separable density operators is convex, compact, and has non-empty interior since it contains a ball around the normalized identity operator~\cite{GB02,GB03,GB05}. 

We use the fact that any pure quantum state $\ket{\psi} \in \complex^N$ can be described approximately classically using $N\cdot f(N)$ bits, for some function $f:\mathbb{N} \rightarrow \mathbb{N}$. The resulting approximate description $\ket{\psi'}$ satisfies $\norm{\ket{\psi}-\ket{\psi'}} \le N2^{-(f(N)+1)}$. We also speak in terms of {\it quantum registers} rather than quantum states in the next two sections. To make the association precise, an $n$-qubit quantum register $\mathsf{X}$ is associated with a vector space $\X = \complex^{2^n}$ and contains any element of $\density{\X}$.

Finally, moving to quantum operations, the notion of measurement used in this paper is that of a Positive Operator Valued Measure (POVM), given by a finite set of positive semidefinite operators $\{\Pi_1, \dots, \Pi_r\} \subset \pos{\X}$ obeying
\[
\sum_{i=1}^r \Pi_i = \I_{\X}.
\]
Regarding unitary operators, we use the fact that any unitary operator acting on $k$ qubits can be approximated within high precision by a finite set of one-qubit, two-qubit, and/or three-qubit unitary operators. Such a finite set is often referred to as an {\it approximately} universal set of quantum gates, and one such set is comprised of the Toffoli, Hadamard, and phase-shift gates. The Solovay-Kitaev theorem implies that the action of an arbitrary unitary operator $U$ on $k$ qubits can be simulated by a composition $\widetilde{U}$ of $O(4^k \, \poly\(\log(1/\epsilon)\))$ many universal gates, such that $\snorm{U - \widetilde{U}} \le \epsilon$~\cite{NC00}.

\subsection{Relevant quantum complexity classes}\label{sscn:complexityclasses}

A promise problem $A = (A_{\yes}, A_{\no})$ is a partition of the set $\{0,1\}^*$ into three disjoint subsets: the set $A_{\yes}$ denotes the set of YES-instances of the problem, the set $A_{\no}$ denotes the set of NO-instances of the problem, and the set $\{0,1\}^*\backslash (A_{\yes} \cup A_{\no})$ is the set of disallowed strings (we are \emph{promised} the input does not fall into this last set). We now define $\class{QMA}(m)$, or $\class{QMA}$ with $m$ unentangled provers.

\begin{definition}[$\class{QMA}(m)$]\label{def:qma-poly}
Let $p:\mathbb{N} \rightarrow \mathbb{N}$ be a polynomially bounded function, and $m:\mathbb{N} \rightarrow \mathbb{N}$ a function. A promise problem $A = (A_{\yes}, A_{\no})$ is in the class $\class{QMA}(m)$ if there exists a polynomial-time generated family of verification circuits $Q=\set{Q_n\mid n\in\mathbb{N}}$ with the following properties:
\begin{mylist}{\parindent}
\item [1.] Each $Q_n$ acts on $n+p(n)$ input qubits, and outputs one qubit.
\item [2.]
 (Completeness) For every $x \in A_{yes}$, there exist $m(|x|)$ quantum proofs 
 \[ \ket{\psi_1}, \ket{\psi_2}, \dots , \ket{\psi_{m(|x|)}} \in \complex^{2^{p(|x|)}} \] 
 such that
 \[
 \operatorname{Pr}[Q_{\abs{x}} \text{ accepts } (x,\ket{\psi_1}\otimes\ldots\otimes\ket{\psi_{m(|x|)}})] \geq 2/3.
 \]
\item [3.]
 (Soundness) For any $x \in A_{no}$ and any $m(|x|)$ quantum proofs 
 \[ \ket{\psi_1}, \ket{\psi_2}, \dots , \ket{\psi_{m(|x|)}} \in \complex^{2^{p(|x|)}}, \]
 we have 
 \[
 \operatorname{Pr}[Q_{\abs{x}} \text{ accepts } (x,\ket{\psi_1}\otimes\ldots\otimes\ket{\psi_{m(|x|)}})] \leq 1/3.
 \]
\end{mylist}

\noindent Furthermore, the class $\class{QMA}(\poly)$ is defined as $\class{QMA}(\poly) = \bigcup_{m \in \poly} \class{QMA}(m)$.
\end{definition}

\noindent We remark that the constants $2/3$ and $1/3$ can be replaced by any $a,b \ge 0$, respectively, such that $a-b \geq 1/\poly(n)$. This does not change the expressive power of the proof system.

All complexity classes considered in this paper are variants of $\class{QMA}(m)$ and satisfy the properties mentioned above in Definition~\ref{def:qma-poly}. We define the following variants, which are relevant to this paper.

\begin{mylist}{\parindent}
	\item [1.] \textbf{[\class{QMA} and \class{MQA}]} The class \class{QMA} is simply \class{QMA}(1). If we replace the quantum proofs in the definition of \class{QMA} with a polynomial-size classical proof string, the corresponding class is denoted \class{MQA}.

	\item [2.] \textbf{[\class{SepQMA}(\poly)]} The class $\class{SepQMA}(\poly)$ is a subclass of $\class{QMA}(\poly)$, wherein Arthur's measurement operator corresponding to outcome {\it accept} is a fully separable operator across the proofs.
	
	\item [3.] \textbf{[$\class{QMA}_{\log}(\poly)$]} The class $\class{QMA}_{\log}(\poly)$ is a subclass of $\class{QMA}(\poly)$, wherein each Merlin's message to Arthur is $O(\log (|x|))$ qubits in length.

\end{mylist}

\noindent For clarity, we give a formal definition of the variant of $\class{BellQMA}$ we introduce, $\class{BellQMA}[r,m]$.

\begin{definition}[$\class{BellQMA}{[}r,m{]}$]
Let $r,m:\natural \rightarrow \natural$ be two functions. We say that a  promise problem $A = (A_{\yes}, A_{\no})$ is in $\class{BellQMA}[r,m]$ if there exists a $\class{QMA}(m)$ verification protocol in which Arthur is restricted to act as follows.

\begin{mylist}{\parindent}
\item[1.]
Arthur performs a polynomial-time quantum computation on the input $x$ and generates a description of quantum circuits $V_1(x), \dots, V_m(x)$, one for each of the $m$ provers.

\item[2.]
(Stage 1) Arthur simultaneously measures all $m$ quantum proofs by applying $V_i(x)$ to the $i$-th quantum proof, where the action of $V_i(x)$ can be described by a unitary operator followed by measurement in the standard basis. The label of the $i$-th measurement outcome is stored as a classical string $y_i$ also identified as an element of $[r(|x|)]$.

\item[3.]
(Stage 2) Arthur runs an efficient quantum verification circuit on input $x$ and measurement outcomes $(y_1,\ldots,y_m)$ to decide whether to accept or reject.
\end{mylist}
\end{definition}

\noindent Note that the key distinction between $\class{BellQMA}[r, m]$ and $\class{BellQMA}(\poly)$ is that the former has the number of measurement outcomes in Stage 1 of the protocol bounded by $r(\abs{x})$, whereas the latter may allow exponentially many possible outcomes. Throughout this paper, we use the notation $\class{BellQMA}[\poly, \poly]$ to denote
        \[
        \class{BellQMA}[\poly, \poly] := \bigcup_{r \in \poly} \bigcup_{m \in \poly}\class{BellQMA}[r,m].
        \]
We remark that, as in~\cite{CD10}, our $\class{BellQMA}$ protocols are allowed to use a \emph{quantum} verification circuit in Stage 2, whereas originally in references~\cite{B08,ABDFS09} only classical processing of measurement outcomes $\set{y_i}$ was allowed in order to emulate the notion of a \emph{Bell experiment} performed by Arthur. We again remark that Theorem~\ref{thm:bellqma} holds even if Arthur is restricted to do classical processing on the measurement outcomes.

\subsection{Cone programming} \label{sscn:conic}

We now briefly review basic notions in conic optimization (or cone programming), which is a generalization of semidefinite optimization. We say that a set $K$ in an underlying Euclidean space is a cone if $x \in K$ implies that $\lambda x \in K$ for all $\lambda > 0$. A cone $K$ is convex if $x,y \in K$ implies that $x+y \in K$. Cone programs are concerned with optimizing a linear function over the intersection of a convex cone and an affine space. It generalizes several well-studied models of optimization including semidefinite programming ($K = \Pos (\X)$) and linear programming ($K = \R_+^n$). In this paper, we are primarily concerned with the cone of fully separable operators $\sep{\X_1,\X_2, \dots ,\X_m}$ which recall is a closed, convex cone with non-empty interior.

Associated with a cone $K$ is its dual cone $K^{\ast}$ defined as
\[
K^{\ast} = \left\{S: \ip{X}{S} \ge 0 \text{ for all } X \in K \right\}.
\]
A cone program associates the following 4-tuple $(C, b, \calA, K)$ to an optimization problem described as:
\begin{align*}
	\text{supremum:}\quad & \ip{C}{X}\\
  \text{subject to:}\quad & \calA(X) = b,\\
  & X \in K.
\end{align*}
Here $\calA: \Span(K) \to \R^{m}$ is a linear transformation. Note that the inner product is defined as in the Euclidean space. For instance, if the cone under consideration is the set of positive semidefinite or separable operators, then the inner product is the standard Hilbert-Schmidt inner product over the space of Hermitian operators. We say that the cone program is \emph{feasible} if $\{X: \calA(X) = b \} \cap K$ is non-empty and \emph{strictly feasible} if $\{X: \calA(X) = b \} \cap \interior(K)$ is non-empty, where $\interior(\cdot)$ denotes the interior of a set.

Cone programs come in primal-dual pairs:
\begin{center}
  \begin{minipage}{2in}
    \centerline{\underline{Primal problem (P)}}\vspace{-7mm}
    \begin{align*}
			\text{supremum:}\quad & \ip{C}{X}\\
  		\text{subject to:}\quad & \calA(X) = b,\\
  		& X \in K.
		\end{align*}
  \end{minipage}
  \hspace*{25mm}
  \begin{minipage}{2in}
    \centerline{\underline{Dual problem (D)}}\vspace{-7mm}
		\begin{align*}
			\text{infimum:}\quad & \ip{b}{y}\\
  		\text{subject to:}\quad & \calA^{\ast}(y) = C + S,\\
  		& S \in K^{\ast}.
  	\end{align*}	
  \end{minipage}
\end{center}
Here $\calA^*$ is the adjoint of $\calA$. A convex cone $K$ is closed if and only if $K = K^{**}$. In other words, the dual of the cone $K^\ast$ is the original cone $K$. Thus, if $K$ is not closed we need to ``order'' the primal-dual pairs since $K \neq K^{**}$ implying the dual of the dual problem is not equal to the primal problem. Since the convex cone of fully separable operators is closed, ordering the primal-dual pairs is not an issue in our case.

Similar to linear programming and semidefinite programming, cone programming has a rich duality theory.

\begin{lemma}[Weak Duality]
If $X$ is primal feasible and $(y,S)$ is dual feasible then
\[
\ip{b}{y} - \ip{C}{X} = \ip{X}{S} \geq 0.
\]
\end{lemma}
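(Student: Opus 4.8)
The plan is to establish the claimed equality by a short chain of substitutions — primal feasibility, the defining property of the adjoint map $\calA^*$, and dual feasibility — and then to read off the inequality directly from the definition of the dual cone. This is the standard weak-duality argument, transported verbatim to the conic setting; the only feature specific to our cones is that nonnegativity of $\ip{X}{S}$ comes from the dual-cone definition rather than from positive semidefiniteness.

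First I would start from $\ip{b}{y}$ and use the primal constraint $\calA(X) = b$ to rewrite it as $\ip{\calA(X)}{y}$. Next I would apply the defining relation of the adjoint, $\ip{\calA(X)}{y} = \ip{X}{\calA^{\ast}(y)}$, to move $\calA$ to the other side of the inner product. Then I would substitute the dual constraint $\calA^{\ast}(y) = C + S$ to obtain $\ip{X}{C+S}$ and expand by bilinearity into $\ip{X}{C} + \ip{X}{S}$. Because the operators $C$ and $X$ are Hermitian, the Hilbert-Schmidt inner product is real and symmetric, so $\ip{X}{C} = \ip{C}{X}$; rearranging then yields exactly $\ip{b}{y} - \ip{C}{X} = \ip{X}{S}$.

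Finally, to conclude $\ip{X}{S} \ge 0$, I would simply invoke that $X \in K$ (primal feasibility) and $S \in K^{\ast}$ (dual feasibility); the definition $K^{\ast} = \{S : \ip{X}{S} \ge 0 \text{ for all } X \in K\}$ gives the nonnegativity immediately. Since every step is routine, there is no substantive obstacle here; the only points requiring care are keeping track of which inner product appears on each side of the adjoint relation — the real Euclidean product on $\R^{m}$ on the left versus the Hilbert-Schmidt product on $\Span(K)$ on the right — and noting that the symmetry $\ip{X}{C} = \ip{C}{X}$ is precisely what lets us pass from $\ip{X}{C}$ to the objective value $\ip{C}{X}$.
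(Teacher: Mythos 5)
Your proof is correct, and it is exactly the standard weak-duality argument: the paper itself states this lemma without proof (deferring to the cone-programming literature, e.g.\ Tun\c cel and Wolkowicz), and the implicit argument it relies on is precisely your chain $\ip{b}{y} = \ip{\calA(X)}{y} = \ip{X}{\calA^{\ast}(y)} = \ip{X}{C} + \ip{X}{S}$ followed by the dual-cone definition. Your attention to the two different inner products on either side of the adjoint relation and to the symmetry $\ip{X}{C} = \ip{C}{X}$ for Hermitian operators is exactly the right level of care; there is nothing to add.
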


This result can be used to show upper bounds on the value of the primal problem or lower bounds on the value of the dual problem. There is also a notion of \emph{strong duality}. We say that \emph{strong duality holds for a problem} (P) if the optimal value of (P) equals the optimal value of (D) and (D) attains an optimal solution. Below we give a condition that guarantees strong duality for (P).

\begin{theorem}[Strong Duality, Version 1]\label{lem:stduality}
If \textup{(P)} is strictly feasible and the optimal value is bounded from above, then strong duality holds for \textup{(P)}, i.e., \textup{(D)} attains an optimal solution and the optimal values for \textup{(P)} and \textup{(D)} coincide.
\end{theorem}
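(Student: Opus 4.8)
The plan is to use the Weak Duality lemma to reduce the claim to the construction of a single dual-feasible point. Write $\gamma := \sup\{\ip{C}{X} : \calA(X) = b,\ X \in K\}$ for the (finite, by hypothesis) primal optimum. By Weak Duality, every dual-feasible $(y,S)$ satisfies $\ip{b}{y} \ge \gamma$, so it suffices to exhibit one dual-feasible pair $(\tilde y, S)$ with $\ip{b}{\tilde y} \le \gamma$; such a pair necessarily has $\ip{b}{\tilde y} = \gamma$ and is therefore an optimal solution of (D) attaining the primal value, which is exactly strong duality. To produce it I would separate a suitable convex set by a hyperplane and read the dual solution off the separating functional. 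Before starting I would note that we may assume $\calA$ is surjective: otherwise replace $\R^m$ by $\operatorname{range}(\calA)$, which contains $b$ since (P) is feasible, leaving both problems and their optimal values unchanged; surjectivity of $\calA$ makes $\calA^{\ast}$ injective, which will be used below.

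The geometric step is to consider the convex set
\[
T = \{ (\calA(X) - b,\ \tau) : X \in K,\ \tau \le \ip{C}{X} \} \subseteq \R^m \times \R .
\]
Convexity follows since $T$ is the image of the convex cone $K$ under an affine map, Minkowski-summed with the downward ray $\{0\}\times(-\infty,0]$. The point $(0,\gamma)$ lies on the boundary of $T$ but not in its interior: any $(0,\gamma+\delta)$ with $\delta>0$ would force some primal-feasible $X$ with $\ip{C}{X}\ge\gamma+\delta$, contradicting the definition of $\gamma$. Applying the supporting-hyperplane theorem at $(0,\gamma)$ (obtained, if desired, by separating the exterior points $(0,\gamma+\delta)$ and passing to a limit as $\delta\to 0$) yields a nonzero functional $(y,\eta)\in\R^m\times\R$ with
\[
\ip{y}{\calA(X)-b} + \eta\,\tau \ \le\ \eta\,\gamma \qquad \text{for all } X\in K,\ \tau\le\ip{C}{X}.
\]
Letting $\tau\to-\infty$ forces $\eta\ge 0$.

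The crux, and the main obstacle, is ruling out the degenerate case $\eta=0$; this is exactly where strict feasibility (a Slater condition) enters. If $\eta=0$ the inequality reads $\ip{y}{\calA(X)-b}\le 0$, i.e. $\ip{\calA^{\ast}(y)}{X}\le\ip{y}{b}$ for all $X\in K$. Since $K$ is a cone, scaling $X$ first shows $\ip{\calA^{\ast}(y)}{X}\le 0$ for all $X\in K$ (so $-\calA^{\ast}(y)\in K^{\ast}$), and then testing the feasible point at scales $2\hat X$ and $\tfrac12\hat X$ pins down $\ip{y}{b}=0$. Hence for a strictly feasible $\hat X\in\interior(K)$ with $\calA(\hat X)=b$ we get $\ip{\hat X}{-\calA^{\ast}(y)} = -\ip{y}{b} = 0$ with $-\calA^{\ast}(y)\in K^{\ast}$; the elementary fact that an interior point of $K$ paired with an element of $K^{\ast}$ vanishes only when that element is zero (proved by perturbing $\hat X$ in every direction and using $S\in K^\ast$) gives $\calA^{\ast}(y)=0$, and injectivity of $\calA^{\ast}$ then forces $y=0$. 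This contradicts $(y,\eta)\ne 0$, so $\eta>0$.

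Finally, with $\eta>0$ I would normalize $\eta=1$ and set $\tau=\ip{C}{X}$ in the supporting inequality, obtaining $\ip{\calA^{\ast}(y)+C}{X}\le\gamma+\ip{y}{b}$ for all $X\in K$. Cone-scaling again gives $\ip{-(\calA^{\ast}(y)+C)}{X}\ge 0$ for all $X\in K$, so $S:=-\calA^{\ast}(y)-C\in K^{\ast}$, while evaluating at $X=0$ yields $\ip{y}{b}\ge -\gamma$. Putting $\tilde y:=-y$, these say precisely $\calA^{\ast}(\tilde y)=C+S$ with $S\in K^{\ast}$ (dual feasibility) and $\ip{b}{\tilde y}\le\gamma$. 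By the reduction in the first paragraph this completes the proof. I expect the only genuinely delicate points to be the careful application of the supporting-hyperplane theorem at the boundary point $(0,\gamma)$ and the elimination of $\eta=0$; everything else is cone-scaling bookkeeping.
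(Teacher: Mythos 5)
The paper never proves this theorem: it is stated as imported background on cone programming, with the reader referred to Tun\c{c}el and Wolkowicz [TW08], so there is no in-paper argument to compare against. Judged on its own, your proof is correct and is the standard Slater-condition/separating-hyperplane argument for conic strong duality: the reduction via Weak Duality to exhibiting one dual-feasible point with $\ip{b}{\tilde y}\le\gamma$ is right, the set $T$ is the right object, and the elimination of $\eta=0$ --- the only place strict feasibility is used --- is complete, including the often-glossed-over preliminary reduction to surjective $\calA$ so that $\calA^{\ast}$ is injective (without it, a nonzero $y\in\ker(\calA^{\ast})$ orthogonal to $b$ would defeat the contradiction). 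Two small repairs are worth making. First, the paper's definition of a cone only demands $\lambda x\in K$ for $\lambda>0$, so $0$ need not belong to $K$; rather than ``evaluating at $X=0$,'' let $\lambda\to 0^{+}$ in $\ip{\calA^{\ast}(y)+C}{\lambda X}\le \gamma+\ip{y}{b}$ to get $\ip{y}{b}\ge-\gamma$. Second, your parenthetical suggestion to obtain the supporting functional by separating the exterior points $(0,\gamma+\delta)$ and passing to a limit is shakier than your primary route: those points avoid $T$ but could a priori lie in $\mathrm{cl}(T)$ (excluding that is essentially the content of the theorem), so strict separation is not obviously available; the direct application of the supporting-hyperplane theorem at $(0,\gamma)\in\mathrm{cl}(T)\setminus\interior(T)$ --- noting explicitly that $(0,\gamma)$ is the limit of points $(0,\ip{C}{X_n})\in T$ for feasible $X_n$ approaching the supremum --- needs no such detour and covers the degenerate case where $T$ has empty interior (take a hyperplane containing $\mathrm{aff}(T)$). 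With those touch-ups, your argument is a valid self-contained proof of what the paper treats as a black box, and it would also yield the paper's Version 2 (Theorem~\ref{lem:stduality2}) by applying it to both (P) and (D) when $K$ is closed.
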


In this paper, we are concerned with closed, convex cones with non-empty interior. Since the dual of the dual problem is the primal problem when $K$ is closed, we can use the following stronger version of strong duality.

\begin{theorem}[Strong Duality, Version 2]\label{lem:stduality2}
Suppose $K$ is a closed, convex cone. If \textup{(P)} and \textup{(D)} are both strictly feasible then strong duality holds for both problems, i.e., both problems attain an optimal solution and the optimal values coincide.
\end{theorem}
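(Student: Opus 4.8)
The plan is to capture the optimal acceptance probability of a $\class{SepQMA}(m)$ protocol as a cone program over the separable cone, prove strong duality to extract an exact dual witness, and then assemble a feasible dual solution for the repeated protocol whose value is exactly the $k$-th power of the single-shot optimum. Fix an input $x$ and let $Q \in \sep{\X_1,\dots,\X_m}$ be Arthur's (fully separable) accept operator, say $Q = \sum_j P_1(j)\otimes\cdots\otimes P_m(j)$ with $P_i(j)\in\pos{\X_i}$. Since the $m$ provers are unentangled and the objective $\ip{Q}{X}$ is linear, the maximum acceptance probability is $\omega := \sup\{\ip{Q}{X} : \tr(X)=1,\ X\in\sep{\X_1,\dots,\X_m}\}$, a cone program with $C=Q$, $b=1$, $\calA(X)=\tr(X)$, and $K=\sep{\X_1,\dots,\X_m}$; its dual minimizes $y$ subject to $y\I - Q \in K^{\ast}$. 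For the $k$-fold parallel repetition the accept operator becomes a reordering of $Q^{\otimes k}$ and, crucially, unentanglement is still only across the $m$ provers, so the relevant feasible states form $\sep{\X_1^{\otimes k},\dots,\X_m^{\otimes k}}$ (prover $i$ holding $\X_i^{\otimes k}$). I denote the resulting optimum $\omega^{(k)}$ and aim to show $\omega^{(k)} = \omega^k$.

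First I would establish the honest lower bound $\omega^{(k)}\ge\omega^k$: if $\rho_1^{\ast}\otimes\cdots\otimes\rho_m^{\ast}$ is an optimal single-shot proof, then letting prover $i$ send $(\rho_i^{\ast})^{\otimes k}$ yields a product-across-provers state accepted with probability $\ip{Q}{\rho_1^{\ast}\otimes\cdots\otimes\rho_m^{\ast}}^k = \omega^k$. Next I would invoke strong duality for the single-shot program: it is strictly feasible because the normalized identity lies in the interior of the separable cone, and its value is bounded above by $1$, so Theorem~\ref{lem:stduality} produces a dual optimum attaining $\omega$. This yields a witness $S^{\ast} := \omega\I - Q \in \sep{\X_1,\dots,\X_m}^{\ast}$, i.e. $\ip{S^{\ast}}{A_1\otimes\cdots\otimes A_m}\ge 0$ for all $A_i\in\pos{\X_i}$.

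The core of the argument is the upper bound $\omega^{(k)}\le\omega^k$, obtained by exhibiting $y=\omega^k$ as a dual-feasible point, i.e. showing (in prover order) that $\omega^k\I - Q^{\otimes k}\in\sep{\X_1^{\otimes k},\dots,\X_m^{\otimes k}}^{\ast}$, after which weak duality finishes. I would use the telescoping identity
\[
\omega^k\I^{\otimes k} - Q^{\otimes k} = \sum_{t=1}^{k}\omega^{t-1}\,\I^{\otimes(t-1)}\otimes S^{\ast}\otimes Q^{\otimes(k-t)},
\]
which follows from $\omega\I - Q = S^{\ast}$, and argue that each summand lies in the convex dual cone $\sep{\X_1^{\otimes k},\dots,\X_m^{\otimes k}}^{\ast}$; since $\omega\ge 0$, the nonnegative combination then does too. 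To test a summand $\I^{\otimes(t-1)}\otimes S^{\ast}\otimes Q^{\otimes(k-t)}$ against an arbitrary product input $C_1\otimes\cdots\otimes C_m$ with $C_i\in\pos{\X_i^{\otimes k}}$, I would expand the separable $Q^{\otimes(k-t)}$ into products over provers, trace out the identity copies, and contract the $Q$-copies prover-by-prover; using that $\tr_2[(\I\otimes P)C]\succeq 0$ whenever $P,C\succeq 0$, each prover $i$ contributes a positive semidefinite $D_i\in\pos{\X_i}$ on the single copy that $S^{\ast}$ acts on, so the whole pairing collapses to a nonnegative sum of terms $\ip{S^{\ast}}{D_1\otimes\cdots\otimes D_m}\ge 0$. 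Summing over the expansion and over $t$ gives dual feasibility.

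The step I expect to require the most care is exactly this last one: verifying that each telescoping summand is a separable witness. The delicacy is the bookkeeping between ``run order'' (in which $Q^{\otimes k}$ and the telescope are natural) and ``prover order'' (in which the feasible set $\sep{\X_1^{\otimes k},\dots,\X_m^{\otimes k}}$ is defined), together with checking that the partial contractions genuinely preserve the product-across-provers structure so that the single-shot witness property of $S^{\ast}$ applies. Combining the two bounds gives $\omega^{(k)}=\omega^k$; since the honest tensor-product strategy already attains the optimum, a product proof across the runs is optimal and there is no incentive to entangle, which is precisely the assertion of perfect parallel repetition (with $\omega$ playing the role of $p(|x|)$). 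One may alternatively prove only the two-fold case $k=2$ and iterate, since the reordered two-fold protocol is again a $\class{SepQMA}(m)$ protocol; the general telescoping above packages this directly.
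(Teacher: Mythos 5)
Your proposal does not prove the statement in question. Theorem~\ref{lem:stduality2} is a general fact of cone programming duality: for an \emph{arbitrary} closed convex cone $K$ and an arbitrary primal--dual pair (P), (D) built from data $(C,b,\calA,K)$, mutual strict feasibility implies that \emph{both} problems attain optimal solutions and that their optimal values coincide. What you have written instead is a proof of Theorem~\ref{thm:parrep}, the perfect parallel repetition theorem for $\class{SepQMA}(m)$: your entire argument concerns accept operators $Q$, the cones $\sep{\X_1,\dots,\X_m}$, honest product strategies, and a telescoping dual witness for the $k$-fold repetition. Parallel repetition is an \emph{application} of the duality theorems, not the duality theorem itself; indeed your proposal explicitly invokes Theorem~\ref{lem:stduality} (Version 1) as a black box, so at no point does it establish any part of the duality theory it was supposed to prove.

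Concretely, a proof of Theorem~\ref{lem:stduality2} would have to argue at the level of general cone programs: (i) strict feasibility of (D) gives, via weak duality, that the optimal value of (P) is finite, so Theorem~\ref{lem:stduality} applies to (P) and yields attainment in (D) together with a zero gap; (ii) because $K$ is closed, $K=K^{**}$, so (P) is (up to sign conventions) the dual of (D), and applying Version 1 with the roles of the two problems exchanged yields attainment in (P) as well; the two halves together give the stated conclusion. This is exactly where the hypothesis that $K$ be closed enters, and it is why the paper singles it out; the paper itself does not reprove the theorem but cites the cone programming literature (Tun\c{c}el and Wolkowicz) for it. None of this appears in your write-up. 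As a side remark: read as a blind proof of Theorem~\ref{thm:parrep}, your argument is essentially sound and even somewhat more general than the paper's, which treats the two-fold repetition by averaging the two witnesses $(t_1\I_{\X}-C_1)\otimes(t_2\I_{\Y}+C_2)$ and $(t_1\I_{\X}+C_1)\otimes(t_2\I_{\Y}-C_2)$ obtained from Lemma~\ref{lem:dualconstruction}, whereas your telescoping sum handles the $k$-fold case directly; but that does not repair the mismatch with the statement you were asked to prove.
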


We refer the reader to the work of Tun\c cel and Wolkowicz~\cite{TW08}
and the references therein for more details on cone programming duality.

\section{Equivalence of \class{MQA} and $\class{QMA}_{\blog}(\poly)$}\label{scn:qmalog}

We now prove Theorem~\ref{thm:qmalog} which states that $\class{MQA}=\class{QMA}_{\log}(\poly)$. We first show the direction $\class{MQA} \subseteq \class{QMA}_{\log}(\poly)$. Let $A = (A_{\yes},A_{\no})$ be a promise problem in \class{MQA} and let $x \in \{ 0,1 \}^n$ be the input string. Suppose the \class{MQA} prover sends an $m$-bit classical proof to the verifier, for polynomially bounded $m$. Then the following simple $\class{QMA}_{\log}(m)$ protocol achieves the desired containment:

\begin{center}
\underline{$\class{QMA}_{\log}(m)$ Protocol}
\end{center}

\begin{mylist}{\parindent}
\item[1.]
\textbf{Embed classical bits into qubits.} Each (unentangled) prover $i\in[m]$ sends a single qubit $\ket{\psi_i}\in \complex^2$ to Arthur. If the $i$-th prover is honest, his/her qubit is the computational basis state corresponding to the $i$-th bit of the classical \class{MQA} proof.

\item[2.] \textbf{Make things classical again.} Arthur measures all proofs in the computational basis, obtaining a classical string $y \in \{0,1\}^{m}$.

\item[3.]
\textbf{Run MQA verification.} Arthur runs the \class{MQA} verification circuit on $x$ and $y$ and accepts if and only if acceptance occurs in the \class{MQA} verification.
\end{mylist}

\noindent The completeness property follows straightforwardly. The soundness property is also easy to observe. Note that Arthur runs the \class{MQA} verification on a classical string $y$ and hence he accepts the string with probability at most $1/3$.

To show the reverse containment, let $A = (A_{\yes},A_{\no})$ be a promise problem in $\class{QMA}_{\log}(\poly)$ and let $x \in \{ 0,1 \}^n$ be the input string. Suppose we have a $\class{QMA}_{\log}(m)$ protocol for polynomially bounded $m$, where prover $i$ sends a $\lceil c\log n\rceil$-qubit state $\ket{\psi_i}$ for some constant $c>0$. Let
\[
r(n) = 2^{\lceil c\log n \rceil} = O(n^c).
\]
The \class{MQA} protocol proceeds as follows:

\begin{center}
\underline{\class{MQA} Protocol}
\end{center}

\begin{mylist}{\parindent}
\item[1.]
\textbf{Describe proofs classically.} The prover sends $m$ classical registers represented by the tuple $(\mathsf{C}_1, \mathsf{C}_2, \dots, \mathsf{C}_m)$, each of length $2n\cdot r(n)$ to Arthur. If the prover is honest, register $\mathsf{C}_i$ contains a classical description of the $i$-th quantum proof of the $\class{QMA}_{\log}(m)$ protocol.

\item[2.]
\textbf{State preparation.} Using the contents of register $\mathsf{C}_i$, for every choice of $i \in [m]$, Arthur prepares the state $\ket{\psi_i}$ by first determining a unitary $U_i$ such that $U_i\ket{0\ldots 0}=\ket{\psi_i}$, and then implementing $U_i$ with high precision using a finite set of approximately universal gates, obtaining states $\ket{\psi_i^\prime}$.

\item[3.]
\textbf{Run $\class{QMA}_{\log}(m)$ verification.} Arthur runs the $\class{QMA}_{\log}(m)$ verification circuit on the state $\ket{\psi_1^\prime}\otimes\cdots\otimes\ket{\psi_m^\prime}$ and accepts if and only if acceptance occurs in the $\class{QMA}_{\log}(m)$ verification.
\end{mylist}

\noindent Observe that each classical register $\mathsf{C}_i$ is of size polynomial in $n$, implying the overall proof length is of polynomial size. In Step~1, the prover uses $n$ bits to represent the real and imaginary parts of each of the polynomially many entities ($r(n)$ entries) required to describe each $\ket{\psi}$. Let the unit vector described by register $\mathsf{C}_i$ be denoted $\ket{\psi_i}$. In Step~2, $U_i$ is easily found as the unitary that maps $\ket{0\ldots0}$ to $\ket{\psi_i}$ as the inverse of the unitary that maps $\ket{\psi_i}$ to $\ket{0\ldots0}$. Such a unitary can be easily decomposed into a product of polynomially many $2 \times 2$ rotations on an $r(n)$-dimensional real space and a diagonal unitary as follows. The first step is to convert the vector $\ket{\psi_i}$ into a real vector by applying an appropriate diagonal unitary operator. The second step is to convert the resulting real unit vector into $\ket{0\ldots0}$ by shifting the amplitudes of any standard basis other than $\ket{0\ldots0}$ to $\ket{0\ldots0}$. Each of these unitary operators can be implemented by a finite set of approximately universal gates (see Bernstein and Vazirani~\cite{BV97} for details). This step also incurs some error, which can be made exponentially small.

Since Steps 1 and 2 can be performed to within inverse exponential error, we thus can ensure $\norm{\ket{\psi_i} - \ket{\psi_i^\prime}} \le \epsilon$ for all $i \in [m]$ and for inverse exponential $\epsilon>0$. By Lemma~\ref{lem:tracenorm}, it follows that the overall precision error is at most $m\epsilon$ for polynomial $m$, and thus the completeness and soundness of the protocol are bounded from below and above by (respectively)
\[
\frac{2}{3} - m\epsilon \qquad \text{and} \qquad \frac{1}{3} + m\epsilon.
\]

Alternatively, the containment $\class{QMA}_{\log}(\poly) \subseteq \class{MQA}$ can be shown using a slightly different protocol\footnote{This protocol was mentioned to us by Richard Cleve.}, where Merlin sends classical descriptions of the quantum circuits that generate the quantum proofs from $\ket{0 \ldots 0}$ instead of classical descriptions of the proofs.

\section{Equivalence of $\class{BellQMA}[\poly, \poly]$ and $\class{QMA}$}\label{scn:bellqma}

We now show Theorem~\ref{thm:bellqma}, i.e., that $\class{BellQMA}[r, m]=\class{QMA}$ for polynomially-bounded functions $r$ and $m$. For notational convenience, let $\Pi_j(i)$ denote Arthur's $i$-th POVM element in Stage 1 of the \class{BellQMA} verification protocol for the $j$-th prover (i.e. $\sum_{i=1}^r \Pi_j(i) =\I$), where we assume without loss of generality that the number of possible outcomes is exactly $r$ for each prover, and where $j\in[m]$ for $m$ the number of provers.

We proceed as follows. Let $A = (A_{\yes}, A_{\no})$ be a promise problem, and $x$ be an input string of length $n := |x|$. As mentioned in Section~\ref{scn:intro}, the containment $\class{QMA} \subseteq \class{BellQMA}[\poly, \poly]$ follows straightforwardly since $\class{QMA} \subseteq \class{BellQMA}[2,1]$. For the reverse containment, suppose we have a $\class{BellQMA}[r, m]$ protocol for polynomially bounded functions $r,m:\natural \rightarrow \natural$ with completeness $2/3$ and soundness $1/3$. We show that this protocol can be simulated by a \class{QMA} protocol where Merlin sends the following proof to Arthur.

Merlin's proof consists of two registers $(\mathsf{X}, \mathsf{Y})$, which should be thought of as the \emph{classical} and \emph{quantum} registers, respectively. Suppose optimal proofs for the $\class{BellQMA}[r, m]$ protocol for input $x$ are given by $\rho_j$ for $j\in [m]$. Then, in the quantum register $\mathsf{Y}$, an honest Merlin should send many copies of the state $\rho_j$. Specifically, $\mathsf{Y}$ is partitioned into $m$ registers $\mathsf{Y}_j$, one for each original prover, and each $\mathsf{Y}_j$ should contain $k$ copies of $\rho_j$, for $k$ a carefully chosen polynomial. In other words, $\mathsf{Y}$ should contain the state $[\rho_1^{\otimes k}]_{\mathsf{Y}_1}\otimes\cdots\otimes[\rho_m^{\otimes k}]_{\mathsf{Y}_m}$. We further view each $\mathsf{Y}_j$ as a block of registers $(\mathsf{Y}_j^1, \dots, \mathsf{Y}_j^k)$ where $\mathsf{Y}_j^l$ should contain the $l$-th copy of $\rho_j$.

In the classical register $\mathsf{X}$, an honest Merlin prepares a quantum state in the computational basis, which intuitively corresponds to a bit string describing the $m$ classical probability distributions Arthur induces upon applying the measurement operation corresponding to Stage~1 of the \class{BellQMA} verification to each of the optimal proofs $\rho_j$, respectively. More formally, we partition $\mathsf{X}$ into $mr$ registers $\mathsf{X}_j^i$ corresponding to each of the $j\in[m]$ provers and $i\in[r]$ POVM outcomes per prover. The content of $\mathsf{X}_j^i$ should be $p_j(i) := \ip{\Pi_j(i)}{\rho_j}$, truncated to $\alpha$ bits of precision ($\alpha$ polynomially bounded), such that $\sum_{i=1}^r p_j(i)=1$. For example, if the $j$-th prover's proof was the single qubit state $\rho_j=\ketbra{0}{0}$, with $\Pi_j(1)=\ketbra{0}{0}$ and $\Pi_j(2)=\ketbra{1}{1}$, then $\mathsf{X}_j = (1, 0)$. We remark that $\mathsf{X}$ plays the role of the classical ``consistency check'' string described in Section~\ref{scn:intro}.

Of course, Merlin may elect to be dishonest and choose not to send a proof of the above form to Arthur by, e.g., sending a quantum state which is entangled across the registers $(\mathsf{X}, \mathsf{Y})$. To catch this, our \class{QMA} protocol is defined as follows:

\begin{center}
\underline{\class{QMA} Protocol}
\end{center}

\begin{mylist}\parindent
\item[1.] Merlin sends Arthur a quantum state in registers $(\mathsf{X},\mathsf{Y})$, for $\mathsf{X}$ and $\mathsf{Y}$ defined as above.

\item[2.]
\textbf{Force $\mathsf{X}$ to be classical.} Arthur measures register $\mathsf{X}$ in the computational basis and reads the measurement outcome. This forces $\mathsf{X}$ to essentially be a classical register of bits, and destroys any entanglement or correlations between $\mathsf{X}$ and $\mathsf{Y}$.

\item[3.]
\textbf{$\mathsf{X}$ should contain probability distributions.} Arthur checks whether the content of registers $\mathsf{X}_j$ form a probability distribution $p_j$, i.e., that $\sum_{i=1}^r p_j(i)=1$. Arthur rejects if this is not the case.

\item[4.]
\textbf{Consistency check: Can the quantum states in $\mathsf{Y}$ reproduce the distributions in $\mathsf{X}$?} Arthur picks independently and uniformly at random, an index $j \in [m]$ and another index $i \in [r]$. He applies the measurement $\{ \Pi_j(i) \}_{i=1}^r$ separately to each register $\mathsf{Y}_j^1, \dots, \mathsf{Y}_j^k$, and counts the number of times outcome $i$ appears, which we denote henceforth as $n_j(i)$. Arthur rejects if
\[
\left\arrowvert \frac{n_j(i)}{k} - p_j(i) \right\arrowvert \ge \frac{1}{p}
\]
for $p$ a carefully chosen polynomial.

\item[5.]
\textbf{Run Stage 2 of the \class{BellQMA} verification and repeat for error amplification.} For each prover $j$, Arthur samples an outcome from $[r]$ according to the distribution in $(\mathsf{X}_j^1, \dots, \mathsf{X}_j^r)$, and runs Stage 2 of the \class{BellQMA} verification on the resulting set of samples. He repeats this process independently a polynomial number of times $q$, and accepts if and only if the \class{BellQMA} procedure accepts on the majority of the runs.
\end{mylist}

Let us discuss the intuition behind the verification procedure above. The key step above is Step 4, where Arthur cross-checks that the classical distributions sent in $\mathsf{X}$ really can be obtained by measuring $m$ quantum proofs, which for an honest Merlin should be unentangled. In this sense, our protocol can alternatively be viewed as using \emph{quantum} proofs ($\mathsf{Y}$) to check validity of a \emph{classical} proof ($\mathsf{X}$). Intuitively, the reason why entanglement in $\mathsf{Y}$ does not help a dishonest Merlin in Step 3 is due to the local nature of Arthur's checks/measurements. Finally, once Arthur is satisfied that $\mathsf{X}$ contains valid distributions, he runs Step 5. We remark that repetition is used here in order to boost the probability of acceptance in the $x\in A_{\yes}$ case to exponentially close to $1$, which is required to separate it from the $x \in A_{\no}$ case, where the probability of catching a dishonest Merlin is only inverse polynomially bounded away from $1$. Once such a gap exists, standard amplification techniques~\cite{KW00,MW05} can be used to further improve completeness and soundness parameters.

To formally analyze completeness and soundness of the protocol, we assign the following values to the parameters mentioned above, all of which are polynomial in $n$ in our setting:
\[
q = 50n \qquad \text{and} \qquad p = 20mr \qquad \text{and} \qquad k = 5p^3\qquad \text{and} \qquad \alpha = 20nmr.
\]

\paragraph{Completeness.} Intuitively, when $x\in A_{\yes}$, Merlin passes Step 4 with probability exponentially close to $1$ since he has no incentive to cheat --- he can send an unentangled proof in Step 1 to Arthur corresponding to the optimal proofs $\rho_j$ in the \class{BellQMA} protocol, such that the expected value of $n_j(i)/k$ is indeed $p_j(i)$. Arthur's checks in Step 4 are then independent local trials, allowing a Chernoff bound to be applied. We then show that Merlin passes each run in Step 5 with constant probability, and applying the Chernoff bound a second time yields the desired completeness exponentially close to $1$ for the protocol.

To state this formally, suppose Merlin is honest and sends registers $(\mathsf{X}, \mathsf{Y})$ in the desired form, i.e., $\mathsf{X}_j^i$ contains $p_j(i)=\ip{\Pi_j(i)}{\rho_j}$ up to $\alpha$ bits of precision, and $\mathsf{Y}_j^l$ contains $\rho_j$. Then, the expected value of the random variable $n_j(i)$ is $\E[n_j(i)] = k \ip{\Pi_j(i)}{\rho_j}$, which is equal to $k\cdot p_j(i)$ up to the error incurred by representing $p_j(i)$ using $\alpha$ bits of precision. In other words,
\begin{equation}\label{eq:precision}
    \abs{ \frac{\E[n_j(i)]}{k} - p_j(i)} < \frac{1}{2^\alpha} < \frac{1}{2p}.
\end{equation}
We can hence upper bound the probability of rejecting in Step 3 by
\[
\Pr \left[ \left\arrowvert \frac{n_j(i)}{k} - p_j(i) \right \arrowvert \ge \frac{1}{p} \right] < \Pr \left[ \left\arrowvert \frac{n_j(i)}{k} - \frac{\E[n_j(i)]}{k} \right \arrowvert \ge \frac{1}{2p} \right]  \le 2\exp\left(-\frac{5p}{4}\right)
\]
where the first inequality follows from Eq.~(\ref{eq:precision}) and the second from the Chernoff bound. Thus, Merlin passes Step 4 with probability exponentially close to $1$.

We now turn to the final step. Since $x\in A_{\yes}$, we know that the optimal distributions, denoted $q_j := \left(\ip{\Pi_j(1)}{\rho_j}, \dots, \ip{\Pi_j(r)}{\rho_j}\right)$ for $j\in[m]$, obtained in Stage 1 of the original \class{BellQMA} protocol are now accepted in Stage 2 with probability at least $2/3$. However, in our case, Merlin was only able to specify each $q_j$ up to $\alpha$ bits of precision per entry as the distributions $p_j$. To analyze how this affects the probability of acceptance, let $P_j$ and $Q_j$ be diagonal operators with entries $P_j(i,i)=p_j(i)$ and $Q_j(i,i)=\ip{\Pi_j(i)}{\rho_j}$, respectively. Letting $\Lambda_{\rm accept}$ denote the POVM element corresponding to outcome {\it accept} in Stage 2 of the BellQMA protocol, we thus bound the change in acceptance probability by:
\begin{eqnarray*}
    \left| \tr\left[\Lambda_{\rm accept} \left(\bigotimes_{j=1}^m P_j - \bigotimes_{j=1}^m Q_j\right)\right] \right| &\leq& \bigg\Vert \bigotimes_{j=1}^m P_j - \bigotimes_{j=1}^m Q_j \bigg\Vert_{\textup{tr}}\\
    &\le& \sum_{j=1}^m \trnorm{P_j - Q_j} \\
    &=& \sum_{j=1}^m \sum_{i=1}^r |p_j(i) - \ip{\Pi_j(i)}{\rho_j}|\\
    &\le& \frac{mr}{2^{20nmr}}
\end{eqnarray*}
where the first inequality follows from the fact that $|\tr(AB)| \le \snorm{A} \cdot \trnorm{B}$ and the second inequality follows from Lemma~\ref{lem:tracenorm}. Therefore, the probability of success for each of the $q$ runs of the \class{BellQMA} protocol in Step 5 is at least
\[
\left(\frac{2}{3} - \frac{mr}{2^{20nmr}}\right) > 0.6.
\]
Since each run is independent, applying the Chernoff bound yields that Arthur accepts Merlin's proof in Step 5 with probability at least $1 - 2\exp( -0.02q)$, as desired. There may be some error incurred in sampling, which can be assumed to be exponentially small so that the success probability of each run is still at least $0.6$.

\paragraph{Soundness.} We now prove that when $x\in A_{\no}$, a dishonest Merlin can win with probability at most inverse polynomially bounded away from $1$. To show this, we bound the probability of passing Step 4 by relating the quantity $p_j(i)$ to the expected value of $n_j(i)/k$, and then apply the Markov bound. The desired relationship follows by observing first that the expected value of $n_j(i)/k$ is precisely the probability of obtaining outcome $i$ when measuring proof $j$ of some (honest) unentangled strategy, followed by arguing that the distribution $p_j$ must hence be far from this latter (honest) distribution if Merlin is to pass Step 5 with probability at least $1/2$ (since $x\in A_{\no}$). Combining these facts, we find that Arthur detects a cheating Merlin with inverse polynomial probability in Step 4.

More formally, let the quantum register $\mathsf{Y}_j$ contain an arbitrary quantum state $\sigma_j$ whose reduced states in registers $\mathsf{Y}_j^l$ for $l\in[k]$ are given by $\sigma_j(l)$, and define
\[
\xi_j := \frac{1}{k}\sum_{l=1}^k\sigma_j(l).
\]
By the linearity of expectation, the expected value of the random variable $n_j(i)/k$ is
\[
\E\left[\frac{n_j(i)}{k}\right] = \frac{1}{k}\sum_{l=1}^k \ip{\Pi_j(i)}{\sigma_j(l)} = \ip{\Pi_j(i)}{\xi_j}.
\]
Our goal is to lower bound the expression
\begin{equation}
    \Pr \left[ \left\arrowvert \frac{n_j(i)}{k} - p_j(i) \right \arrowvert \ge \frac{1}{p} \right]. \label{eq:prob}
\end{equation}
To achieve this, we first substitute $p_j(i)$ above with a quantity involving $\E[n_j(i)/k]$, and then apply the Markov bound.

To relate $\E[n_j(i)/k]$ to $p_j(i)$, we first remark that in order for Merlin to pass each run of Step 5 with probability exponentially close to $1$, he must send probability distributions $p_j$, which are accepted by Stage 2 of the \class{BellQMA} verification with probability at least $1/2$. Let
\[
q_j(i):= \ip{\Pi_j(i)}{\xi_j}.
\]
Let us imagine a $\class{BellQMA}$ protocol where the $j$-th Merlin sends $\xi_j$ as his quantum proof. Since $x \in A_{\no}$, by the soundness property of the $\class{BellQMA}(m)$ proof system, the success probability of the Merlins is at most $1/3$. In other words, sampling outcomes from the probability distributions $(q_j(1), \ldots, q_j(r))$ and then running the second stage of the \class{BellQMA} verification will yield outcome {\it accept} with probability at most $1/3$. Also, observe that
\[
\E\left[\frac{n_j(i)}{k}\right] = q_j(i).
\]
It follows that by letting $P_j$ and $Q_j$ be diagonal operators with the probability vectors $p_j$ and $q_j$ on their diagonals, respectively, and $\Lambda_{\rm accept}$ the POVM element corresponding to outcome {\it accept} in Stage 2 of the BellQMA protocol, we have
\[
\frac{1}{10} < \abs{\tr\left[\Lambda_{\rm accept} \left(\bigotimes_{j=1}^m P_j - \bigotimes_{j=1}^m Q_j\right)\right]}\le \bigg\Vert \bigotimes_{j=1}^m P_j - \bigotimes_{j=1}^m Q_j \bigg\Vert_{\textup{tr}} \le \sum_{j=1}^m \trnorm{P_j - Q_j}.
\]
Here, the (loose) lower bound of $1/10$ comes from the following two observations. First, the distributions represented by the diagonal operators $Q_j$'s are derived from a \class{BellQMA} protocol and therefore achieve a success probability at most $1/3$ by the soundness property of the \class{BellQMA} verification. Second, the distributions represented by the diagonal operators $P_j$'s have to achieve a success probability strictly greater than $1/2$ per run to guarantee that Merlin wins Step 5 with probability exponentially close to $1$. Combining these two, we get that the difference between the success probabilities obtained by distributions described by operators $\set{P_j : j \in [m]}$ and $\set{Q_j: j \in [m]}$ should be at least $1/6$ modulo the error incurred due to finite precision when encoding the distributions $p_j$. The use of the constant $1/10$ overcompensates for this precision error. Hence, there exists a $j$ such that
\[
\trnorm{P_j - Q_j} = \sum_{i=1}^r |p_j(i) - q_j(i)| \ge \frac{1}{10m}
\]
implying the existence of an $i$ such that
\begin{equation}\label{eq:relation}
    |p_j(i) - q_j(i)| \ge \frac{1}{10mr}.
\end{equation}
This is our desired relationship between $p_j(i)$ and $\E[n_j(i)/k]=q_j(i)$. Note that the probability of picking pair $(i,j)$ in Step 4 is $1/mr$.

We now substitute this relationship into Eq.~(\ref{eq:prob}) and apply the Markov bound. Specifically, choose $i$ and $j$ as in Eq.~(\ref{eq:relation}), and assume that $p_j(i) > \ip{\Pi_j(i)}{\xi_j}$. Then, we have
\[
\Pr \left[ \left\arrowvert \frac{n_j(i)}{k} - p_j(i) \right\arrowvert < \frac{1}{p} \right] < \Pr \left[ \frac{n_j(i)}{k} - \E\left[\frac{n_j(i)}{k}\right] > \frac{1}{10mr} - \frac{1}{p} \right] \le 1-\frac{1}{2p}.
\]

\noindent The case of $p_j(i) < \ip{\Pi_j(1)}{\xi_j}$ is similar. We conclude that a dishonest Merlin is caught in Step 4 with probability at least $1/2p$. Therefore, the probability that Arthur proceeds to Step 5 is upper bounded by
\[
\left(\frac{1}{mr}\right)\left(1 - \frac{1}{40mr}\right) + \left(1 - \frac{1}{mr}\right)(1) = 1 - \frac{1}{40m^2r^2}
\]
where the first term represents the case where Arthur selects the correct pair $(i,j)$ to check, and the second term the complementary case, in which we assume the cheating prover can win with probability $1$. Hence the overall success probability of a dishonest Merlin is at most $1 - 1/40m^2r^2$, which is bounded away from $1$ by an inverse polynomial.

Finally, as mentioned before, since $m$ and $r$ are polynomially bounded functions, we have that the completeness is exponentially close to $1$, while the soundness is bounded away from $1$ by an inverse polynomial. By known amplification techniques for \class{QMA} protocols~\cite{KW00,MW05}, one can amplify the completeness and soundness errors to be exponentially close to 0. This proves our desired containment.

\section{Perfect parallel repetition for $\class{SepQMA}(\poly)$}\label{scn:parrep}

We now show Theorem~\ref{thm:parrep}, i.e., that the class $\class{SepQMA}(m)$ admits perfect parallel repetition. Before we proceed, recall that the closed convex cone $\sep{\X_1, \dots, \X_m}$ is defined to contain operators of the form
\[
\sum_{i=1}^k P_{1}(i) \otimes \dots \otimes P_{m}(i)
\]
where $P_j(i) \in \pos{\X_j}$, for every $j \in [m]$ and $i \in [k]$.
This is the cone of interest and it is known to be closed and convex with non-empty interior. Given $C$ to be the measurement operator corresponding to outcome {\it accept}, the maximum success probability of the Merlins in any $\class{QMA}(m)$ protocol can be written as the maximum of $\ip{\rho}{C}$, where $\rho$ is a density operator in $\sep{\X_1, \dots, \X_m}$. By standard convexity argument, one can always assume that the maximum is achieved by a pure product state.

For the remainder of the section, it will be convenient for us to distinguish  two instances of $\class{SepQMA}(m)$ protocols as the \emph{first} and \emph{second} protocol. For the first $\class{SepQMA}(m)$ protocol we can write the maximum acceptance probability as the optimal value of the primal problem in the following primal-dual pair (where the operator  $C_1$ is Arthur's POVM element corresponding to outcome {\it accept}):
\begin{center}
  \begin{minipage}{2.5in}
    \centerline{\underline{Primal problem ($\textup{P}_1$)}}\vspace{-7mm}
    \begin{align*}
			\text{maximize:}\quad & \ip{\rho_1}{C_1} \\
  		\text{subject to:}\quad & \tr(\rho_1) = 1,\\
  		& \rho_1 \in \sep{\X_1, \dots, \X_m},
  	\end{align*}
  \end{minipage}
  \hspace*{25mm}
  \begin{minipage}{2.5in}
    \centerline{\underline{Dual problem ($\textup{D}_1$)}}\vspace{-7mm}
		\begin{align*}
			\text{minimize:}\quad & t_1\\
  		\text{subject to:}\quad & t_1\I_{\X} = C_1 + W_1,\\
  		& W_1 \in \sep{\X_1, \ldots, \X_m}^{\ast},
  	\end{align*}	
  \end{minipage}
\end{center}
where $\X$ denotes $\X_1 \otimes \cdots \otimes \X_m$. The use of ``maximum'' and ``minimum'' is justified in the above programs since
\[
\overline{\rho}_1 = \frac{\I_{\X}}{\dim(\X)} \qquad \text{and} \qquad (\overline{t}_1, \overline{W}_1) = (2, 2\I_{\X} - C_1)
\]
are strictly feasible solutions for $(\textup{P}_1)$ and $(\textup{D}_1)$, respectively~\cite{GB02, GB03, GB05}. Hence, by Theorem~\ref{lem:stduality2}, strong duality holds for both problems, i.e., both problems attain an optimal solution and the optimal values are the same. We note that the the dual cone contains the set of \emph{entanglement witnesses} in the theory of entanglement, see~\cite{HHHH09}. We can similarly formulate the acceptance probability of the second protocol as
\begin{center}
  \begin{minipage}{2.5in}
    \centerline{\underline{Primal problem ($\textup{P}_2$)}}\vspace{-7mm}
    \begin{align*}
			\text{maximize:}\quad & \ip{\rho_2}{C_2} \\
  		\text{subject to:}\quad & \tr(\rho_2) = 1,\\
  		& \rho_2 \in \sep{\Y_1, \dots, \Y_m},
  	\end{align*}
  \end{minipage}
  \hspace*{25mm}
  \begin{minipage}{2.5in}
    \centerline{\underline{Dual problem ($\textup{D}_2$)}}\vspace{-7mm}
		\begin{align*}
			\text{minimize:}\quad & t_2\\
  		\text{subject to:}\quad & t_2\I_{\Y} = C_2 + W_2,\\
  		& W_2 \in \sep{\Y_1, \ldots, \Y_m}^{\ast},
  	\end{align*}	
  \end{minipage}
\end{center}
where $\Y$ denotes $\Y_1 \otimes \cdots \otimes \Y_m$. Since we are considering \class{SepQMA} protocols it holds that
\[
C_1 \in \sep{\X_1, \dots, \X_m} \qquad \text{and} \qquad C_2 \in \sep{\Y_1, \dots, \Y_m}.
\]

Given the two cone programs above, the maximum acceptance probability of the two-fold repetition of the protocol can hence be expressed as

\text{}

\centerline{\underline{Primal problem ($\textup{P}$)}\hspace{57mm}\underline{Dual problem ($\textup{D}$)}\hspace{3mm}}\vspace{-7mm}
\begin{align*}
	\text{maximize:}\quad & \ip{\rho}{C_1 \otimes C_2} \hspace{38mm}&\text{minimize:}\quad & t\\
  \text{subject to:}\quad & \tr(\rho) = 1, & \text{subject to:}\quad & t \, \I_{\X \otimes \Y} = C_1 \otimes C_2 + W,\\
  & \rho \in \sep{\X_1 \otimes \Y_1, \dots, \X_m \otimes \Y_m}, & & W \in \sep{\X_1 \otimes \Y_1, \ldots, \X_m \otimes \Y_m}^{\ast}.
\end{align*}

\noindent Note that the operators $\rho$ and $W$ are elements of $\herm{\X_1 \otimes \cdots \otimes \X_m \otimes \Y_1 \otimes \cdots \otimes \Y_m}$.

To show Theorem~\ref{thm:parrep}, observe that if $\rho_1$ and $\rho_2$ are any respective optimal solutions of ($\textup{P}_1$) and ($\textup{P}_2$), then $\rho_1 \otimes \rho_2$ is a feasible solution of ($\textup{P}$). Therefore the optimal value of ($\textup{P}$) is at least the product of the optimal values of ($\textup{P}_1$) and ($\textup{P}_2$). It remains to show that in fact \emph{no} other strategy for the prover can perform better than this honest strategy. To do so, we demonstrate a dual feasible solution for ($\textup{D}$) attaining this same objective value.

More formally, let $(t_1, W_1)$ and $(t_2, W_2)$ be respective dual optimal solutions of $(\textup{D}_1)$ and $(\textup{D}_2)$. By strong duality, $t_1$ is the optimal value of $(\textup{P}_1)$ and $t_2$ is the optimal value of $(\textup{P}_2)$. We show that $t_1 \cdot t_2$ is an upper bound on the optimal value of $(\textup{P})$ by exhibiting a solution $(t_1 \cdot t_2, W)$ which is feasible in $(\textup{D})$, for some $W \in \sep{\X_1 \otimes \Y_1, \ldots, \X_m \otimes \Y_m}^{\ast}$. We first prove the following useful lemma.

\begin{lemma}\label{lem:dualconstruction}
For complex Euclidean spaces $\X_1, \ldots, \X_m$ and $\Y_1, \ldots, \Y_m$, the following two containments hold:
\begin{itemize}
\item $\sep{\X_1, \ldots, \X_m}^{\ast} \otimes \sep{\Y_1, \ldots, \Y_m} \subseteq \sep{\X_1 \otimes \Y_1, \ldots, \X_m \otimes \Y_m}^{\ast}$, and
\item $\sep{\X_1, \ldots, \X_m} \otimes \sep{\Y_1, \ldots, \Y_m}^{\ast} \subseteq \sep{\X_1 \otimes \Y_1, \ldots, \X_m \otimes \Y_m}^{\ast}$.
\end{itemize}
\end{lemma}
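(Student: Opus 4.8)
The plan is to reduce the statement to a single, concrete inner-product inequality and then exhibit an explicit ``contraction'' that moves the problem from the composite spaces $\X_j \otimes \Y_j$ back to the spaces $\X_j$ alone, where the dual-cone hypothesis on the first factor can be applied directly.

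First I would observe that the second containment follows from the first by symmetry: swapping the roles of the $\X_j$ and $\Y_j$ registers is implemented by a permutation of tensor factors, and being fully separable is manifestly invariant under reordering of factors (a product of positive operators remains a product of positive operators). Likewise, since the dual cone $\sep{\X_1 \otimes \Y_1, \ldots, \X_m \otimes \Y_m}^\ast$ is itself a convex cone, it is closed under nonnegative sums; hence it suffices to prove the first containment for a single product operator, i.e.\ to show that $S \otimes T \in \sep{\X_1 \otimes \Y_1, \ldots, \X_m \otimes \Y_m}^\ast$ whenever $S \in \sep{\X_1, \ldots, \X_m}^\ast$ and $T \in \sep{\Y_1, \ldots, \Y_m}$.

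By definition of the dual cone, this amounts to showing $\ip{Z}{S \otimes T} \ge 0$ for every $Z \in \sep{\X_1 \otimes \Y_1, \ldots, \X_m \otimes \Y_m}$, and by linearity of the inner product it is enough to check this on a generator $Z = \bigotimes_{j=1}^m R_j$ with each $R_j \in \pos{\X_j \otimes \Y_j}$. Writing $T = \sum_k \bigotimes_{j=1}^m B_j(k)$ with $B_j(k) \in \pos{\Y_j}$, the central construction is to contract each block against the corresponding factor of $T$ over its $\Y_j$ register, defining
\[
A_j(k) := \tr_{\Y_j}\!\big[R_j\,(\I_{\X_j} \otimes B_j(k))\big] \in \lin{\X_j}.
\]
Two facts then drive the argument. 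The first is a positivity claim: each $A_j(k) \in \pos{\X_j}$, which I would prove by diagonalizing $B_j(k)$ and noting that sandwiching $R_j \succeq 0$ between its eigenvectors yields a positive operator on $\X_j$; consequently $\sum_k \bigotimes_{j=1}^m A_j(k) \in \sep{\X_1, \ldots, \X_m}$. The second is the key identity
\[
\ip{Z}{S \otimes T} = \ip{\sum_k \bigotimes_{j=1}^m A_j(k)}{S},
\]
obtained by tracing out the $\Y$ registers first: since $S$ acts trivially on $\Y$, the partial trace over $\Y$ of $Z\,(S \otimes T)$ factors blockwise and collapses precisely to $\big(\sum_k \bigotimes_{j=1}^m A_j(k)\big)\,S$. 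Combining the two facts, the right-hand side is nonnegative because $S$ lies in $\sep{\X_1,\ldots,\X_m}^\ast$, which is exactly what we need.

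The main obstacle is this second fact, namely verifying the partial-trace identity with the correct bookkeeping of tensor-factor reorderings: $S \otimes T$ naturally lives on $(\bigotimes_j \X_j) \otimes (\bigotimes_j \Y_j)$, whereas $Z$ lives on $\bigotimes_j (\X_j \otimes \Y_j)$, so one must carefully track the shuffle isomorphism and use the elementary relation $\tr_\Y[W(S \otimes \I_\Y)] = \tr_\Y[W]\,S$ to push $S$ through the $\Y$-trace while the $B_j(k)$ contract locally against the $R_j$. Once this identity is in hand, the positivity claim and the definition of the dual cone finish the proof immediately.
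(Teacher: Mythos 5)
Your proposal is correct and takes essentially the same route as the paper's proof: both rest on the identity $\ip{Z}{S \otimes T} = \ip{\tr_{\Y}\left[Z(\I_{\X} \otimes T)\right]}{S}$ together with the observation that $\tr_{\Y}\left[Z(\I_{\X} \otimes T)\right]$ lies in $\sep{\X_1, \ldots, \X_m}$, shown by expanding into product generators so that the contraction factors into local terms $\tr_{\Y_j}\left[R_j(\I_{\X_j} \otimes B_j(k))\right]$, each of which is positive semidefinite. The only cosmetic difference is that you establish this local positivity by diagonalizing $B_j(k)$, whereas the paper sandwiches $R_j$ by $\I_{\X_j} \otimes B_j(k)^{1/2}$ and invokes the fact that the partial trace preserves positive semidefiniteness.
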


\begin{proof}
We prove the first condition as the second is nearly identical. Fix $W \in \sep{\X_1, \ldots, \X_m}^{\ast}$ and $C \in \Sep(\Y_1, \ldots, \Y_m)$. Then for $S \in \Sep(\X_1 \otimes \Y_1, \ldots, \X_m \otimes \Y_m)$, we have
\[
\ip{W \otimes C}{S} = \inner{W}{\tr_{\Y}\left[S(\I_{\X} \otimes C)\right]} \geq 0
\]
if $\tr_{\Y}\left[S(\I_{\X} \otimes C)\right] \in \Sep(\X_1, \ldots, \X_m)$. Therefore, it suffices to prove that 
\[ \tr_{\Y}\left[S(\I_{\X} \otimes C)\right] \in \Sep(\X_1, \ldots, \X_m). \]  To this end, let
\[
S = \sum_{i=1}^k \bigotimes_{l=1}^m \rho_i(l) \qquad \text{ and } \qquad C = \sum_{j=1}^{k'} \bigotimes_{l=1}^m \sigma_j(l)
\]
where $\rho_i(l) \in \pos{\X_l \otimes \Y_l}$ and $\sigma_j(l) \in \pos{\Y_l}$ for all $i \in [k]$, $j \in [k']$, and $l \in [m]$. Now we can write $\tr_{\Y}\left[ S\left(\I_{\X} \otimes C \right) \right]$ as
\[ \tr_{\Y} \left[ \left( \sum_{i=1}^k \bigotimes_{l=1}^m \rho_i(l)\right) \left( \I_{\X} \otimes \sum_{j=1}^{k'} \bigotimes_{l=1}^m \sigma_j(l) \right) \right] = \sum_{i=1}^k \sum_{j=1}^{k'} \bigotimes_{k=1}^m \tr_{\Y_k} \left[ \rho_i(k) \left(\I_{\X_k} \otimes \sigma_j(k) \right) \right].
\]
Hence, $\tr_{\Y}\left[ S\left(\I_{\X} \otimes C \right) \right]\in\sep{\X_1, \ldots, \X_m}$  since $\tr_{\Y_k} \left[ \rho_i(k) \left(\I_{\X_k} \otimes \sigma_j(k) \right) \right]$ is positive semidefinite for all $i,j,k$. The latter follows since for positive semidefinite $A_{\X\otimes\Y}$ and $B_{\Y}$,
\[
    \tr_{\Y}[A_{\X\otimes\Y}(I_{\X}\otimes B_{\Y})] = \tr_{\Y}[(I_{\X}\otimes B^{\frac{1}{2}}_{\Y})A_{\X\otimes\Y}(I_{\X}\otimes B^{\frac{1}{2}}_{\Y})]\succeq 0,
\]
because the partial trace preserves positive semidefiniteness. This concludes the proof. 
\end{proof}

We now use Lemma~\ref{lem:dualconstruction} to construct two operators in $\sep{\X_1 \otimes \Y_1, \ldots, \X_m \otimes \Y_m}^{\ast}$, the appropriate convex combination of which is the dual feasible solution we are seeking. Specifically, observe first that since for the two instances of the $\class{SepQMA}(m)$ protocol, we have 
\[ C_1 \in \sep{\X_1, \ldots, \X_m} \qquad \text{and} \qquad C_2 \in \sep{\Y_1, \ldots, \Y_m}, \] 
and since $\I_{\X}$ and $\I_{\Y}$ are fully separable operators, it follows that
\[
t_1 \I_{\X} + C_1 \in \Sep(\X_1, \ldots, \X_m) \qquad \text{and} \qquad t_2 \I_{\Y} + C_2 \in \Sep(\Y_1, \ldots, \Y_m)
\]
for all $t_1, t_2 \geq 0$. Using Lemma~\ref{lem:dualconstruction}, we thus obtain operators
\begin{equation}\label{eq:dualeq1}
(t_1 \I_{\X} - C_1) \otimes (t_2 \I_{\Y} + C_2) \in \sep{\X_1 \otimes \Y_1, \ldots, \X_m \otimes \Y_m}^{\ast}
\end{equation}
and
\begin{equation} \label{eq:dualeq2}
(t_1 \I_{\X} + C_1) \otimes (t_2 \I_{\Y} - C_2) \in \sep{\X_1 \otimes \Y_1, \ldots, \X_m \otimes \Y_m}^{\ast}
\end{equation}
where $t_1 \I_{\X} - C_1\in \sep{\X_1, \ldots, \X_m}^{\ast}$ by the constraints of $(\textup{D}_1)$, and similarly for $t_2 \I_{\Y} - C_2$. Since $\sep{\X_1 \otimes \Y_1, \ldots, \X_m \otimes \Y_m}^{\ast}$ is a convex cone, it follows that the average of Eqs.~\eqref{eq:dualeq1} and~\eqref{eq:dualeq2} yields the desired operator
\[
W := t_1 \cdot t_2 \, \I_{\X \otimes \Y} -  C_1 \otimes C_2 \in \sep{\X_1 \otimes \Y_1, \ldots, \X_m \otimes \Y_m}^{\ast}.
\]
We conclude that $\left(t_1 \cdot t_2, W \right)$ is a feasible solution of the dual problem ($\textup{D}$) with objective value $t_1 \cdot t_2$ as desired. This concludes the proof of Theorem~\ref{thm:parrep}.

We note that there are instances of $\class{QMA}(\poly)$ protocols, which are not $\class{SepQMA}(\poly)$ protocols, that admit perfect parallel repetition. Although this fact is known in the literature (see Harrow and Montanaro~\cite{HM10} for details), we provide a concrete example below.

First, note that the maximum acceptance probability of Arthur in a $\class{QMA}(m)$ protocol is upper bounded by $\snorm{C}$, where $C$ is the accepting measurement operator. Now, consider the two-qubit POVM operator
\[
C := \frac{1}{2} \ket{00}\bra{00} +  \frac{1}{2}\ket{\Psi^+}\bra{\Psi^+}
\]
where
\[
\ket{\Psi^{+}} := \frac{1}{\sqrt{2}} \ket{01} + \frac{1}{\sqrt{2}} \ket{10}.
\]
We can easily check that $C$ has two eigenvalues, $0$ and $1/2$, and two principal eigenvectors $\ket{00}$ and $\ket{\Psi^+}$, one of which is a product state. It follows that the maximum acceptance probability is $1/2$. By the multiplicative property of the infinity-norm under tensor products, it holds that the maximum acceptance probability of the $k$-fold repetition is exactly $1/2^k$.

We now argue that $C$ is not a separable operator. Suppose for the sake of contradiction that $C$ can be written as
\[
\sum_{i=1}^n \rho_i \otimes \sigma_i
\]
for some $\rho_i, \sigma_i \in \Pos(\C^2)$. Then we have
\[
0 = \inner{C}{\ket{11}\bra{11}} = \sum_{i=1}^n \bra{1} \rho_i \ket{1} \bra{1} \sigma_i \ket{1}
\]
which implies $\rho_i \ket{1} = 0$ or $\sigma_i \ket{1} = 0$ for all $i \in [n]$. This leads to the contradiction
\[
\frac{1}{4} = \inner{C}{\ket{01} \bra{10}} = \sum_{i=1}^n \bra{1} \rho_i \ket{0} \bra{0} \sigma_i \ket{1} = 0.
\]
Alternatively, one can show that $C$ is not separable by observing that $C$ has a non-positive partial transpose~\cite{P96,HHH96}.

\section{Conclusions and open problems} \label{scn:conclusions}

In this paper, we have studied three variants of multi-prover quantum Merlin-Arthur proof systems. We first showed that a system with polynomially many provers is indeed strictly more powerful than a single prover system if messages are restricted to be logarithmic in length, unless $\class{BQP}=\class{MQA}$. We next showed that polynomially many provers do not provide additional expressive power over a single prover in the setting where the verifier is restricted to first applying unentangled and non-adaptive measurements with at most a polynomial number of outcomes per proof. Both of these questions make steps towards understanding the major open question of whether $\class{QMA}$ with polynomially many provers is more powerful than $\class{QMA}$. Finally, we used cone programming duality to give an alternate proof of the fact that perfect parallel repetition holds whenever a $\class{QMA}$ verifier's POVM element corresponding to {\it accept} is a fully separable operator.

A consequence of our first result is that the two variants of the class $\class{QMA}(\poly)$, where Merlins send logarithmic-size proofs and Merlins send constant-size proofs are equal. A natural question concerning our first result is to understand the expressive power of the variant of $\class{QMA}(\poly)$, where Merlins are restricted to send $\poly\log(\abs{x})$ qubits to Arthur. Another open question concerning the results presented in this paper is the relationship between $\class{BellQMA}(\poly)$ and \class{QMA}. We believe that understanding the complexity of $\class{BellQMA}$ protocols, or more generally \class{LOCCQMA} protocols, will shed new light on the bigger question pertaining to \class{QMA}(2) and \class{QMA}. Another avenue of interest is to find further applications of the cone programming characterization of multi-prover quantum Merlin-Arthur proof systems. A straightforward question concerning the parallel repetition result presented in this paper is to investigate whether cone programming duality can be used to analyze the product state test in the Ref.~\cite{HM10}. Another question one can ask is to find other classes of $\class{QMA}(m)$ protocols that admit a perfect parallel repetition theorem.

\section*{Acknowledgements}

We thank Richard Cleve, Tsuyoshi Ito, Iordanis Kerenidis, Ashwin Nayak, Oded Regev, and Levent Tun{\c c}el for insightful discussions. Most of this work was completed at IQC at the University of Waterloo during the authors' graduate studies. We also thank the EU-Canada Exchange Program and LIAFA, Paris for their hospitality, where part of this work was completed. SG acknowledges support from NSERC, NSERC MSFSS, the David R.~Cheriton Graduate Scholarship program, the President's Graduate Scholarship, and CIFAR. JS acknowledges support from NSERC, MITACS, ERA (Ontario), and the President's Graduate Scholarship. SU acknowledges support in parts from CIFAR, MITACS, NSERC, Ontario's Ministry of Research and Innovation, QuantumWorks, the U.S. A.R.O., David R.~Cheriton Graduate Scholarship, the Mike and Ophelia Graduate Fellowship, and internal grants of Centre for Quantum Technologies.

\bibliographystyle{alpha}
\bibliography{BibQMA}

\newcommand{\etalchar}[1]{$^{#1}$}
\begin{thebibliography}{ABOBS08}

\bibitem[Aar06]{A06}
S.~Aaronson.
\newblock {QMA/qpoly} is contained in {PSPACE/poly}: {D}e-{M}erlinizing quantum
  protocols.
\newblock In {\em Proceedings of the 21st Annual IEEE Conference on
  Computational Complexity}, pages 273–--286, 2006.

\bibitem[AB09]{AB09}
S.~Arora and B.~Barak.
\newblock {\em Computational Complexity: A Modern Approach}.
\newblock Cambridge University Press, 2009.

\bibitem[ABD{\etalchar{+}}09]{ABDFS09}
S.~Aaronson, S.~Beigi, A.~Drucker, B.~Fefferman, and P.~Shor.
\newblock The power of unentanglement.
\newblock {\em Theory of Computing}, 5:1--42, 2009.

\bibitem[ABOBS08]{ABBS08}
D.~Aharonov, M.~Ben-Or, F.~Brand{\~{a}}o, and O.~Sattath.
\newblock The pursuit for uniqueness: {E}xtending {V}aliant-{V}azirani theorem
  to the probabilistic and quantum settings.
\newblock Available at arXiv.org e-Print quant-ph/0810.4840v1, 2008.

\bibitem[AGIK09]{AGIK09}
D.~Aharonov, D.~Gottesman, S.~Irani, and J.~Kempe.
\newblock The power of quantum systems on a line.
\newblock {\em Communications in Mathematical Physics}, 287(1):41--65, 2009.

\bibitem[AK07]{AK07}
S.~Aaronson and G.~Kuperberg.
\newblock Quantum versus classical proofs and advice.
\newblock {\em Theory of Computing}, 3:129–--157, 2007.

\bibitem[ALM{\etalchar{+}}98]{ALMSS98}
S.~Arora, C.~Lund, R.~Motwani, M.~Sudan, and M.~Szegedy.
\newblock Proof verification and the hardness of approximation problems.
\newblock {\em Journal of the ACM}, 45(3):501--555, 1998.

\bibitem[AN02]{AN02}
D.~Aharonov and T.~Naveh.
\newblock Quantum {NP} - a survey.
\newblock Available at arXiv.org e-Print quant-ph/0210077v1, 2002.

\bibitem[AS98]{AS98}
S.~Arora and S.~Safra.
\newblock Probabilistic checking of proofs: A new characterization of {NP}.
\newblock {\em Journal of the ACM}, 45(1):70--122, 1998.

\bibitem[BCY11]{BCY11}
F.~Brand{\~{a}}o, M.~Christandl, and J.~Yard.
\newblock A quasipolynomial-time algorithm for the quantum separability
  problem.
\newblock Available at arXiv.org e-Print quant-ph/1011.2751v2, 2011.

\bibitem[Bei08]{Beigi08}
S.~Beigi.
\newblock {NP} vs {$\text{QMA}_{\log}(2)$}.
\newblock Available at arXiv.org e-Print quant-ph/0810.5109v1, 2008.

\bibitem[Bra06]{B06}
S.~Bravyi.
\newblock Efficient algorithm for a quantum analogue of 2-{SAT}.
\newblock Available at arXiv.org e-Print quant-ph/0602108, 2006.

\bibitem[Bra08]{B08}
F.~Brand{\~{a}}o.
\newblock {\em Entanglement Theory and the Quantum Simulation of Many-Body
  Physics}.
\newblock PhD thesis, Imperial College London, London, 2008.
\newblock Available at arXiv.org e-Print quant-ph/1011.2751v2.

\bibitem[BS07]{BS07}
S.~Beigi and P.~Shor.
\newblock On the complexity of computing zero-error and {H}olevo capacity of
  quantum channels.
\newblock Available at arXiv.org e-Print quant-ph/0709.2090, 2007.

\bibitem[BT09]{BT10}
H.~Blier and A.~Tapp.
\newblock All languages in {NP} have very short quantum proofs.
\newblock In {\em Proceedings of the 3rd International Conference on Quantum,
  Nano and Micro Technologies}, pages 34--37, 2009.
\newblock Available at arXiv.org e-Print quant-ph/0709.0738v2, first posted in
  2007.

\bibitem[BV97]{BV97}
E.~Bernstein and U.~Vazirani.
\newblock Quantum complexity theory.
\newblock {\em SIAM Journal on Computing}, 26(5):1411–--1473, 1997.

\bibitem[CD10]{CD10}
J.~Chen and A.~Drucker.
\newblock Short multi-prover quantum proofs for {SAT} without entangled
  measurements.
\newblock Available at arXiv.org e-Print quant-ph/1011.0716v2, 2010.

\bibitem[CF11]{CF11}
A.~Chiesa and M.~Forbes.
\newblock Improved soundness for {QMA} with multiple provers.
\newblock Available at arXiv.org e-Print quantum-ph/1108.2098v1, 2011.

\bibitem[Coo71]{C71}
S.~Cook.
\newblock The complexity of theorem proving procedures.
\newblock In {\em Proceedings of the 3rd Annual ACM Symposium on Theory of
  Computing}, pages 151--158, 1971.

\bibitem[CSUU08]{CSUU08}
R.~Cleve, W.~Slofstra, F.~Unger, and S.~Upadhyay.
\newblock Perfect parallel repetition theorem for quantum {XOR} proof systems.
\newblock {\em Computational Complexity}, 17(2):282--299, 2008.

\bibitem[GB02]{GB02}
L.~Gurvits and H.~Barnum.
\newblock Largest separable balls around the maximally mixed bipartite quantum
  state.
\newblock {\em Physical Review A}, 66(6):062311, 2002.

\bibitem[GB03]{GB03}
L.~Gurvits and H.~Barnum.
\newblock Separable balls around the maximally mixed multipartite quantum
  states.
\newblock {\em Physical Review A}, 68(4):042312, 2003.

\bibitem[GB05]{GB05}
L.~Gurvits and H.~Barnum.
\newblock Better bound on the exponent of the radius of the multipartite
  separable ball.
\newblock {\em Physical Review A}, 72(3):032322, 2005.

\bibitem[GNN12]{LNN12}
F.~Le Gall, S.~Nakagawa, and H.~Nishimura.
\newblock On {QMA} protocols with two short quantum proofs.
\newblock {\em Quantum Information and Computation}, 12(7\&8):0589--0600, 2012.

\bibitem[Gut09]{G09}
G.~Gutoski.
\newblock {\em Quantum strategies and local operations}.
\newblock PhD Thesis, University of Waterloo, 2009.
\newblock Available at arXiv.org e-Print quant-ph/1003.0038.

\bibitem[HHH96]{HHH96}
M.~Horodecki, P.~Horodecki, and R.~Horodecki.
\newblock Separability of mixed states: necessary and sufficient conditions.
\newblock {\em Physics Letters A}, 223(1):1--8, 1996.

\bibitem[HHHH09]{HHHH09}
R.~Horodecki, P.~Horodecki, M.~Horodecki, and K.~Horodecki.
\newblock Quantum entanglement.
\newblock {\em Review of Modern Physics}, 81(2):865--942, 2009.

\bibitem[HM10]{HM10}
A.~Harrow and A.~Montanaro.
\newblock An efficient test for product states, with applications to quantum
  {M}erlin-{A}rthur games.
\newblock In {\em Proceedings of the 51st IEEE Annual Symposium on Foundations
  of Computer Science}, pages 633--642, 2010.

\bibitem[JGL10]{JGL09}
S.~Jordan, D.~Gosset, and P.~Love.
\newblock {QMA}-complete problems for stoquastic {H}amiltonians and {M}arkov
  matrices.
\newblock {\em Physical Review A}, 81(3):032331, 2010.

\bibitem[JW06]{JW06}
D.~Janzing and P.~Wocjan.
\newblock {BQP}-complete problems concerning mixing properties of classical
  random walks on sparse graphs.
\newblock Available at arXiv.org e-Print quant-ph/0610235v2, 2006.

\bibitem[KKR06]{KKR06}
J.~Kempe, A.~Kitaev, and O.~Regev.
\newblock The complexity of the local {H}amiltonian problem.
\newblock {\em SIAM Journal on Computing}, 35(5):1070--1097, 2006.

\bibitem[KMY03]{KMY03}
H.~Kobayashi, K.~Matsumoto, and T.~Yamakami.
\newblock Quantum {M}erlin {A}rthur proof systems: {A}re multiple {M}erlins
  more helpful to {A}rthur?
\newblock In {\em Proceedings of the 14th International Symposium on Algorithms
  and Computation}, pages 189--198, 2003.
\newblock Volume 2906 of {\em Lecture Notes in Computer Science}, Springer.

\bibitem[KR03]{KR03}
J.~Kempe and O.~Regev.
\newblock 3-local {H}amiltonian is {QMA}-complete.
\newblock {\em Quantum Information and Computation}, 3(3):258--264, 2003.

\bibitem[KRT10]{KRT10}
J.~Kempe, O.~Regev, and B.~Toner.
\newblock Unique games with entangled provers are easy.
\newblock {\em {SIAM} Journal on Computing}, 39(7):3207--3229, 2010.

\bibitem[KSV02]{KSV02}
A.~Kitaev, A.~Shen, and M.~Vyalyi.
\newblock {\em Classical and Quantum Computation}.
\newblock American Mathematical Society, 2002.

\bibitem[KW00]{KW00}
A.~Kitaev and J.~Watrous.
\newblock Parallelization, amplification, and exponential time simulation of
  quantum interactive proof systems.
\newblock In {\em Proceedings of the 32nd Annual ACM Symposium on Theory of
  Computing}, pages 608--617, 2000.

\bibitem[LCV07]{LCV07}
Y.-K. Liu, M.~Christandl, and F.~Verstraete.
\newblock Quantum computational complexity of the {N}-representability problem:
  {QMA} complete.
\newblock {\em Physical Review Letters}, 98(11):110503, 2007.

\bibitem[Lev73]{L73}
L.~Levin.
\newblock Universal search problems.
\newblock {\em Problems of Information Transmission}, 9(3):265--266, 1973.
\newblock {I}n Russian.

\bibitem[Liu06]{L06}
Y.-K. Liu.
\newblock Consistency of local density matrices is {QMA}-complete.
\newblock In {\em Proceedings of the 10th International Workshop on
  Randomization and Computation}, pages 438--449, 2006.
\newblock Volume 4110 of {\em Lecture Notes in Computer Science}, Springer.

\bibitem[MW05]{MW05}
C.~Marriott and J.~Watrous.
\newblock Quantum {A}rthur-{M}erlin games.
\newblock {\em Computational Complexity}, 14(2):122--152, 2005.

\bibitem[NC00]{NC00}
M.~Nielsen and I.~Chuang.
\newblock {\em Quantum Computation and Quantum Information}.
\newblock Cambridge University Press, 2000.

\bibitem[OT08]{OT05}
R.~Oliveira and B.~Terhal.
\newblock The complexity of quantum spin systems on a two-dimensional square
  lattice.
\newblock {\em Quantum Information and Computation}, 8(10):0900--0924, 2008.

\bibitem[Per96]{P96}
A.~Peres.
\newblock Separability criterion for density matrices.
\newblock {\em Physical Review Letters}, 77:1413, 1996.

\bibitem[Ros11]{R09}
B.~Rosgen.
\newblock Testing non-isometry is {QMA}-complete.
\newblock In {\em Proceedings of the 5th Conference on Theory of Quantum
  Computation, Communication, and Cryptography}, pages 63--76, 2011.
\newblock Volume 6519 of {\em Lecture Notes in Computer Science}, Springer.

\bibitem[SV09]{SV09}
N.~Schuch and F.~Verstraete.
\newblock Computational complexity of interacting electrons and fundamental
  limitations of density functional theory.
\newblock {\em Nature Physics}, 5:732--735, 2009.

\bibitem[TW08]{TW08}
L.~Tun{\c c}el and H.~Wolkowicz.
\newblock Strong duality and minimal representations for cone optimization.
\newblock Technical Report CORR 2008-07, Department of Combinatorics and
  Optimization, University of Waterloo, Waterloo, Ontario, Canada, August 2008
  (revised: December 2008).

\bibitem[Wat02]{W02}
J.~Watrous.
\newblock Limits on the power of quantum statistical zero-knowledge.
\newblock In {\em Proceedings of the 50th Annual IEEE Symposium on Foundations
  of Computer Science}, pages 459--468, 2002.

\bibitem[Wat09]{W09}
J.~Watrous.
\newblock {\em Encyclopedia of Complexity and System Science}, chapter Quantum
  Computational Complexity.
\newblock Springer, 2009.

\bibitem[WMN10]{WMN10}
T.-C. Wei, M.~Mosca, and A.~Nayak.
\newblock Interacting boson problems can be {QMA}-hard.
\newblock {\em Physical Review Letters}, 104(4):040501, 2010.

\bibitem[WY08]{WY08}
P.~Wocjan and J.~Yard.
\newblock The {J}ones polynomial: {Q}uantum algorithms and applications in
  quantum complexity theory.
\newblock {\em Quantum Information and Computation}, 8(1\&2):0147–--0180, 2008.

\end{thebibliography}

\end{document}